\documentclass{amsart}
\usepackage{amsfonts}
\usepackage{amsmath,amsthm}
\usepackage{amsfonts,amssymb}
\usepackage{mathrsfs}
\usepackage{enumerate}
\usepackage{verbatim}

\usepackage{multirow}
\usepackage{booktabs}

\usepackage[numbers,sort&compress]{natbib}

\newtheorem{thm}{Theorem}[section]
\newtheorem{cor}[thm]{Corollary}
\newtheorem{lem}[thm]{Lemma}

\numberwithin{equation}{section}

\newcommand{\al}{\alpha}

\def\vz{\varepsilon}

\def\lz{\lambda}

\def\({\Bigl(}
\def \){ \Bigr)}

\newcommand{\uu}{\mathfrak{u}}

\newcommand{\NN}{\mathbb{N}}

\newcommand{\bsgamma}{\boldsymbol{\gamma}}

\newcommand{\bsj}{\boldsymbol{j}}

 \def\NN{{\mathbb N}}

 \def\RR{{\mathbb R}}

\def\bsj{{\bf j}}

\makeatletter
\@namedef{subjclassname@2020}{\textup{2020} Mathematics Subject Classification}
\makeatother

\begin{document}
\def\RR{\mathbb{R}}
\def\Exp{\text{Exp}}
\def\FF{\mathcal{F}_\al}

\title[] {A note about exponential tractability of linear weighted
tensor product problems in the worst-case setting}

\author{Zirong Liu} \address{ School of Mathematical Sciences, Capital Normal
University, Beijing 100048, China.} \email{2240501011@cnu.edu.cn}

\author{Heping Wang} \address{ School of Mathematical Sciences, Capital Normal
University, Beijing 100048, China.}\email{wanghp@cnu.edu.cn}

\author{Kai Wang} \address{ School of Science, Langfang Normal University,
Langfang 065000, China.} \email{cnuwangk@163.com.}

\keywords{Weighted linear tensor product problem; Exponential
$(s,t)$-weak tractability; Exponential uniform weak tractability;
worst-case setting} 

\begin{abstract}

This paper is devoted to discussing  the weighted linear tensor
product problems in the worst  case setting. We consider
algorithms that use finitely many evaluations of arbitrary
continuous linear functionals. We investigate  exponential $(s,
t)$-weak tractability (EXP-$(s, t)$-WT) with $\max(s,t)<1$ and
exponential  uniform weak tractability (EXP-UWT) under the
absolute or normalized error criterion.  We  solve the problem by
filling the remaining gaps left open on EXP-tractability. That is,
we  obtain necessary and sufficient conditions for EXP-$(s, t)$-WT
with $\max(s, t) < 1$ and for EXP-UWT.
\end{abstract}

\maketitle
\input amssym.def

\section{Introduction and main results}

We study $d$-variate linear problem $S = \{S_d\}_{d\in \Bbb
N}$, where $S_d$ is a linear operator from a  Banach function space $H_d$ into another Banach function space $G_d$, $d$ may be large even huge. We consider
algorithms that use finitely many continuous linear functionals on
$H_d$. The information complexity $n(\vz, S_{d})$ is defined to be
the minimal number of linear functionals needed to find an
approximation of $S_{d}$ to within an error threshold $\vz$.
Tractability is the study of how information complexity $n(\vz,
S_{d})$ depends on $\vz$ and $d$. There are two kinds of
tractability based on polynomial convergence and exponential
convergence. The algebraic tractability (ALG-tractability)
describes how  $n(\vz, d)$ behaves as a function of $d$ and
$\vz^{-1}$, while the EXP-tractability does as one of   $d$ and
$(1+\ln\vz^{-1})$. Recently the  study of ALG-tractability and
EXP-tractability   has attracted much interest, and a great number
of interesting results
 have been obtained (see
\cite{ CW1, DKPW14, DLPW11, GW11, IKPW16a, KW19, NW08, NW10, NW12,
PP14, PPW17, S13, SW15, W19, X15} and the references therein).

 This paper is devoted to discussing EXP-tractability of  linear weighted
tensor product problems in the worst case setting. For unweighted
linear  tensor product problems, the results about the
EXP-tractability were obtained in \cite{HKW19, PP14, PPW17}.
However,
 there are many negative results for exponential tractability. Hence  the authors in
 \cite{KPW} introduced weighted
linear  tensor product problems and obtained  necessary and
sufficient conditions for various notions of EXP-tractability on
weights and univariate singular values. Note that in \cite{KPW}
the cases of EXP-$(s,t)$-weak tractability (EXP-$(s,t)$-WT) with $\max(s, t) < 1$ and EXP-uniform weak tractability (EXP-UWT) were
left open.

  In this paper we investigate  EXP-$(s, t)$-WT with $\max(s, t) < 1$  and EXP-UWT for
the above weighted linear tensor product problems. We  solve the
problem by filling the remaining gaps left open on
EXP-tractability. That is, we obtain necessary and sufficient
conditions for EXP-$(s, t)$-WT with $\max(s, t) < 1$ and EXP-UWT.

\subsection{Weighted linear tensor product problems.}

\

Let $H_1$ be a separable infinite-dimensional Hilbert space with
inner product $\langle \cdot,\cdot\rangle_{H_1}$. We consider a
compact linear operator $S_1 : H_1\to G_1$, where $G_1$ is also a
Hilbert space. Then $W_1 := S_1^* S_1: H_1 \to H_1$ is  a compact
self-adjoint non-negative operator, where $T^*$ is the adjoint
operator of a linear operator $T$. Let $\{(\lz_j, e_j)\}_{j\in
\Bbb N}$ denote the eigenpairs of $W_1$, that is,
$$W_1e_j=\lz_je_j,\ e_j\in H_1,\ \ \lz_1\ge \lz_2\ge\dots\ge 0,$$
and $\{e_j\}_{j\in\Bbb N}$ is an orthonormal basis in $H_1$.
Without loss of generality we assume that $\lambda_1=1$ and
$\lambda_2>0$ in the sequel.  Due to compactness of $S_1$ we have
$\lim\limits_{j\to\infty}\lambda_j=0$. Denote
$$H_{1,0}={\rm span}\, \{e_1\},\ \ H_{1,1}=H_{1,0}^\bot.$$
Then $H_1=H_{1,0}\oplus H_{1,1}.$ Hence for each $f\in H_1$, we have the
unique orthogonal  decomposition $f= f_0+ f_1$, where $ f_0\in H_{1,0}, \  f_1\in H_{1,1}$.

 For $\gamma>0$, we define the weighted Hilbert space $H_{1,\gamma}$ to be the Hilbert space $H_1$ with inner product
$$\langle f,g\rangle_{H_{1,\gamma}}:=\langle f_0, g_0 \rangle_{H_1}+\gamma^{-1}\langle f_1, g_1 \rangle_{H_1}.$$
And the weighted operator $S_{1,\gamma}$ is defined to be the
operator $S_1$ from the weighted Hilbert space $H_{1,\gamma}$ to
$G_1$. It follows that the operator
$W_{1,\gamma}=S_{1,\gamma}^*S_{1,\gamma}: H_{1,\gamma}\to
H_{1,\gamma} $ is a compact linear operator and the eigenpairs of
the operator $W_{1,\gamma}$ are $\{(\lz_{\gamma,j},e_j)\}_{j\in\Bbb
N}$, where
\begin{equation}\label{1.1}
\lambda_{\gamma,j}=\left\{
\begin{array}{ll}
1, & \mbox{ if } j=1,\\
\gamma \lambda_j, & \mbox{ if } j \ge 2.
\end{array}\right.
\end{equation}

 Let $\bsgamma=\{\gamma_j\}_{j \in \NN}$ be a sequence of
non-increasing positive reals, which we refer to as weights in the
following.  For simplicity we assume that $\gamma_j\in(0,1]$. For
$d\in \Bbb N$, the tensor product spaces $H_{d,\bsgamma}$ and $G_d$ are denoted by
$$H_{d,\bsgamma}=H_{1,\gamma_1}\otimes
H_{1,\gamma_2}\otimes\dots\otimes H_{1,\gamma_d},\ \
G_d=G_1\otimes \dots\otimes G_1,$$
respectively, the tensor product operator $S_{d,\bsgamma}: H_{d,\bsgamma}\to G_d$ is represented by
$$S_{d,\bsgamma}=S_{1,\gamma_1}\otimes\dots\otimes
S_{1,\gamma_d}.$$ Clearly,
$W_{d,\bsgamma}:=S_{d,\bsgamma}^*S_{d,\bsgamma}: H_{d,\bsgamma}\to
H_{d,\bsgamma} $ is a compact linear operator and the eigenpairs
of $W_{d,\bsgamma}$ are
$\{(\lz_{d,\bsgamma,\bsj},e_{\bsj})\}_{\bsj\in\Bbb N^d}$, where
$$\lz_{d,\bsgamma,\bsj}=\prod_{k=1}^d \lz_{k,j_k},  \ \ \  \ e_{\bsj}=e_{j_1}\otimes\cdots \otimes e_{j_d}.$$
$\lz_{k,j_k}:=\lz_{\gamma_k,j_k}$ is defined as in \eqref{1.1}.

In the following, we write $[d]:=\{1,2,\cdots,d\}.$ For $\bsj=(j_1,\cdots,j_d)\in \NN^d$, define
$$\uu(\bsj):=\{k \in [d] : j_k \ge 2\} \ \ \ \mbox{ and }\ \ \
\lambda_{d,\bsj}:= \lambda_{j_1}\lambda_{j_2}\cdots
\lambda_{j_d}.$$
The eigenvalues of $W_{d,\bsgamma}$ can also be written as
$$\lambda_{d,\bsgamma,\bsj}:= \gamma_{\uu(\bsj)} \lambda_{d,\bsj}=
\left(\prod_{k=1 \atop j_k \ge 2}^d \gamma_k\right)
\lambda_{j_1}\cdots \lambda_{j_d} = \left(\prod_{k\in \uu(\bsj)}
\gamma_k \right) \lambda_{d,\bsj}.$$
See \cite[Section~5.3]{NW08} or \cite{KPW}.

If $\gamma_j\equiv1$, the weighted tensor product space
$H_{d,\bsgamma}$  recedes to the tensor product space
$H_d=H_1\otimes \cdots \otimes H_1$, and the weighted linear
tensor product operator $S_{d,\bsgamma}$ reduces to the linear
tensor product operator $S_d=S_1\otimes \cdots \otimes S_1$.

Now consider  the linear weighted tensor product problem
$S_{\bsgamma}=\{S_{d,\bsgamma}\}_{d\in\NN}$. We want to
approximate $S_{d,\bsgamma}$ by algorithms $A_{d,n}(f)$ of the
form
\begin{equation}\label{1.2}
A_{d,n}(f)=\phi_{d,n}(L_1(f),L_2(f),\cdots,L_n(f)),
\end{equation}
where  $L_1,\cdots ,L_n$ belong to $\Lambda_d^{\rm
all}=H_{d,\bsgamma}^*$, the space consisting of all continuous
linear functionals on $H_{d,\bsgamma}$, and $\phi_{d,n}:\RR^n \to
G_d$ is an arbitrary  mapping. The choice of $L_j$ can be
adaptive, i.e., it may depend on the previously computed
information $L_1(f),L_2(f),\cdots,L_{j-1}(f)$. The worst case
error of an algorithm $A_{d,n}$ is defined as
$$ e(A_{d,n})=\sup_{f \in H_{d,\bsgamma} \atop \|f\|_{H_{d,\bsgamma}} \le 1}\|S_{d,\bsgamma}(f)-A_{d,n}(f)\|_{G_d}.$$
The $n$-th minimal worst-case error is defined by
$$e(n,S_{d,\bsgamma})= \inf_{A_{d,n}}e (A_{d,n}),$$ where the infimum is taken over all algorithms of the form \eqref{1.2}. When $n=0$, the initial error $e_0$ is denoted by
$$ e_0=\sup_{f \in H_{d,\bsgamma} \atop \|f\|_{H_{d,\bsgamma}} \le 1}\|S_{d,\bsgamma}(f)\|_{G_d}. $$

Clearly, we have
$$e_0=\|S_{d,\bsgamma}\|=\|W_{d,\bsgamma}\|^{1/2}=\max_{\bsj\in\NN^d}(\lz_{d,\bsgamma,\bsj})^{1/2}=1.$$ Hence, the
problem is well normalized for all $d\in\NN$ and all product
weights $\bsgamma$. The information complexity for the normalized
or absolute error criterion is defined by
$$n(\varepsilon,S_{d,\bsgamma})= \min\{n \, : \, e(n,S_{d,\bsgamma})\le \varepsilon\}.$$
It is known that for the class $\Lambda_d^{\rm all}$ the
information complexity is fully characterized in terms of the
singular values of $S_{d,\bsgamma}$, or equivalently, in terms of
the eigenvalues of $W_{d,\bsgamma}=S_{d,\bsgamma}^*S_{d,\bsgamma}$
(see \cite{NW08}).

Denote
$$A_{\varepsilon,d}=\{(n_1,\ldots,n_d)\in \NN^d :
\lambda_{1,n_1}\cdots \lambda_{d,n_d}> \varepsilon^2\}.$$ Then the
information complexity can be rewritten as
\begin{equation}\label{1.3}
n(\vz,S_{d,\bsgamma})=
 \bigg|\Big\{\bsj \in \NN^d : \lambda_{d,\bsgamma,\bsj}
 =\prod_{k=1}^d \lambda_{k,j_k}=\lambda_{1,j_1}\cdots \lambda_{d,j_d}>
 \varepsilon^2 \Big\}\bigg|=\big|A_{\vz,d}\big|,
\end{equation}
where $|A|$ denotes the number of the elements in a set $A$.
Clearly,
$$n(\vz,S_{d,\bsgamma})\le n(\vz_1,S_{d_1,\bsgamma})\ \ \ \mbox{for all $\vz_1\le \vz$ and $d_1\ge d$}.$$
Hence, for decreasing $\vz$ and increasing $d$, the information complexity
is non-increasing.

\subsection{Various  notions of EXP-tractability}
\
\newline
\indent In this subsection we briefly recall the basic
EXP-tractability notions. We are interested in the behavior of
$n(\varepsilon,S_{d,\bsgamma})$ when $d+\ln\vz^{-1}$ goes to
infinity in an arbitrary way. We define various notions of EXP
tractability.

The problem $S_{\bsgamma}=\{S_{d,\bsgamma}\}$ is said to be:

\begin{itemize}
 \item {\it Exponentially strongly polynomially
tractable (EXP-SPT)} if there exist $C, p \ge 0$ such that
 \begin{equation}\label{1.4}
 n(\varepsilon,S_{d,\bsgamma}) \le
C (1 + \ln  \varepsilon^{-1})^p, \ \ \ \ \forall d \in \NN,\
\forall \varepsilon \in (0,1].
 \end{equation}
 The infimum of all exponents $p \ge0$ such that
\eqref{1.4} holds for some $C \ge 0$ is called the exponent
of EXP-SPT and is denoted by $p^*$.
 \item {\it Exponentially polynomially tractable (EXP-PT)} if
there exist $C, p, q \ge 0$ such that
$$
n(\varepsilon,S_{d,\bsgamma}) \le C d^{\,q} (1 +
\ln\varepsilon^{-1})^p, \ \ \ \ \forall d \in \NN,\ \forall
\varepsilon \in (0,1].$$
 \item {\it Exponentially quasi-polynomially tractable
(EXP-QPT)} if there exist $C, t \ge 0$ such that
 \begin{equation}\label{1.5}
 n(\varepsilon,S_{d,\bsgamma})
\le C \exp(t(1+\ln d)(1+\ln(1 + \ln\varepsilon^{-1}))),\ \ \ \
\forall d \in \NN,\ \forall \varepsilon \in (0,1].
 \end{equation}
 The infimum of all exponents $t \ge0$ such that
\eqref{1.5} holds for some $C \ge 0$ is called the exponent
of EXP-QPT and is denoted by $t^*$.
 \item {\it Exponentially $(s,t)$-weakly tractable (EXP-$(s,t)$-WT)}
for fixed positive $s$ and $t$ if
$$
\lim_{d+\varepsilon^{-1} \rightarrow \infty} \frac{\ln
n(\varepsilon, S_{d,\bsgamma})}{d^t + (\ln\varepsilon^{-1})^s} = 0.
$$ If $s=t=1$, we speak of
{\it exponential weak tractability (EXP-WT)}. \item {\it
Exponentially uniformly weakly tractable (EXP-UWT)} if
EXP-$(s,t)$-WT holds for all positive $s$ and $t$.
\end{itemize}

\subsection{Main results}\label{Main Results}

\

 In the worst case setting, for the (unweighted) linear tensor
product  problem, necessary and sufficient conditions on
univariate singular values  under which we can or cannot achieve
different notions of exponential tractability were obtained in
\cite{ HKW19, PP14, PPW17}. However, we have many negative results
for EXP-tractability. For example, for the linear tensor product
problem, we cannot achieve EXP-WT.

For the weighted linear tensor product problems,  the authors in
\cite{KPW}  obtained  necessary and sufficient conditions for
various notions of EXP-tractability on weights $\{\gamma_j\}$ and
univariate singular values $\{(\lz_j)^{1/2}\}$, where in the
sequel we always assume that $\{\gamma_j\}$ is a non-increasing
positive sequence, and  $\{\lz_j\}$ is a  non-increasing sequence
satisfying $1=\lz_1\ge\lz_2>0$ and $\lim\limits_{j\to\infty}\lz_j=0$. See the following
EXP-tractability results of the problem $S_{\bsgamma}$.

\vskip 2mm

 $\bullet$  EXP-SPT holds  if and only if
$$
\lim_{j\to\infty}\lambda_j=\lim_{j\to\infty}\gamma_j=0\ \ \
\mbox{and}\ \ \ B_{{\rm EXP-SPT}}:=\limsup_{\varepsilon
\rightarrow 0} \frac{d(\varepsilon) \ln j(\varepsilon)}{\ln\ln
  \frac{1}{\varepsilon}}< \infty,$$where \begin{equation}
 j(\varepsilon) =  \max\{j \in \NN : \lambda_j >
 \varepsilon^2\}\ \ {\rm and}\ \ d(\varepsilon) =  \max\{d \in \NN : \gamma_d\,>\,\varepsilon^2\}.
\end{equation}
If this holds then the exponent of EXP-SPT is $p^*=B_{{\rm
EXP-SPT}}$.

 \vskip 3mm

 $\bullet$  EXP-SPT and EXP-PT are equivalent.  \vskip 3mm

 $\bullet$  EXP-QPT holds
 if and only if
$$
\lim_{j\to\infty}\lambda_j=\lim_{j\to\infty}\gamma_j=0\ \ \
\mbox{and}\ \ \ B_{{\rm EXP-QPT}}:=\limsup_{\varepsilon
\rightarrow 0} \frac{d(\varepsilon) \ln j(\varepsilon)}{[\ln
d(\varepsilon)] \ln\ln \frac{1}{\varepsilon}}< \infty.
$$
If this holds then the exponent of EXP-QPT is $t^*=B_{{\rm
EXP-QPT}}$.

 \vskip 3mm

 $\bullet$  Let $s=t=1$. EXP-WT holds  if and only if
$$\lim_{j \rightarrow \infty} \gamma_j=0\ \ \mbox{ and }\ \ \lim_{j \rightarrow \infty} \frac{\ln
\frac{1}{\lambda_j}}{\ln j}=\infty.$$
 \vskip 2mm

 $\bullet$  Let $s=1$  and $t<1$.
EXP-$(1,t)$-WT holds  if and only if  \begin{equation}
\label{1.7}\lim_{j \rightarrow \infty} \frac{\ln
\frac{1}{\gamma_j}}{\ln j}=\infty\ \ {\rm and }\ \ \lim_{j
\rightarrow \infty} \frac{\ln \frac{1}{\lambda_j}}{\ln j}=\infty.
\end{equation}
 \vskip 2mm

 $\bullet$  Let $s=1$ and $t>1$.
EXP-$(s,t)$-WT holds  if and only if
$$
\mbox{$\gamma_j$'s are arbitrary}\ \ \ \mbox{and}\ \ \
 \lim_{j \rightarrow \infty} \frac{\ln \frac{1}{\lambda_j}}{\ln
   j}=\infty.
$$
 \vskip 2mm

 $\bullet$  Let $s>1$, $t\le1$ and $\lambda_2<1$.
EXP-$(s,t)$-WT holds  if and only if
$$
\mbox{$\gamma_j$'s are arbitrary}\ \ \ \mbox{and}\ \ \ \lim_{j
\rightarrow \infty}\frac{(\ln\frac{1}{\lambda_j})^s}{\ln
  j}=\infty.
$$
 \vskip 2mm

 $\bullet$  Let $s>1$, $t\le1$ and $\lambda_2=1$.
EXP-$(s,t)$-WT holds  if and only if
$$
\exists\,p\in\NN \ \ \mbox{with}\ \ \gamma_p<1\ \ \ \mbox{and}\ \
\ \lim_{j \rightarrow
\infty}\frac{(\ln\frac{1}{\lambda_j})^s}{\ln
  j}=\infty.
$$
 \vskip 2mm

 $\bullet$ Let $s>1$ and $t>1$.
EXP-$(s,t)$-WT holds  if and only if
$$
\mbox{$\gamma_j$'s are arbitrary}\ \ \ \mbox{and}\ \ \ \lim_{j
\rightarrow \infty}\frac{(\ln\frac{1}{\lambda_j})^s}{\ln
  j}=\infty.
$$
 \vskip 2mm

 $\bullet$  Let $s<1$ and $t>1$.
EXP-$(s,t)$-WT holds for arbitrary $\gamma_j$'s  if and only if
\begin{equation}\label{1.8}
\lim_{j\to\infty}\frac{\left(\ln \frac1{\lambda_j}\right)^{\eta}}
{\ln j}=\infty \ \ \mbox{with}\ \ \eta=\frac{s(t-1)}{t-s}.
\end{equation}
 \vskip 2mm

 $\bullet$  Let $s<1$ and $t=1$.
EXP-$(s,1)$-WT holds  if and only if  for arbitrary integers
$d,k,j$ with $j\ge 2$ and $k\le d$ it is true that
\begin{equation}\label{1.9}
\lim_{d+\gamma_k^{-d}\lambda_j^{-d}\to\infty}
\frac{\left(\ln\frac1{\gamma_k}\right)^s+\left(\ln\frac1\lambda_j\right)^s
}{d^{1-s}\ln j}=\infty.
\end{equation}

\vskip 3mm

However, the authors in \cite{KPW} did not find out sufficient and
necessary conditions on $\{\gamma_j\}$ and $\{\lz_j\}$ for
EXP-$(s,t)$-WT with $\max(s,t)< 1$ and  EXP-UWT. In this paper, we
 solve the problem by filling the remaining gaps on
EXP-tractability. That is, we  obtain  necessary and sufficient
conditions for EXP-$(s, t)$-WT with $\max(s, t) < 1$ (see Theorem
\ref{thm1}) and for EXP-UWT (see Theorem \ref{thm2}).

\begin{thm}\label{thm1}
Let  $0<s<1$ and $0<t\leq1$. EXP-$(s,t)$-WT holds if and only if

\begin{equation}\label{1.10}
\lim_{\vz\to 0}\frac{\ln\ln j(\vz)}{\ln\ln\vz^{-1}}=0
\end{equation}
 and
\begin{equation}\label{1.11}
\lim_{\vz\to 0}\frac{d(\vz)^{1-s}\ln j(\vz)}{(\ln\vz^{-1})^s}=0.
\end{equation}
\end{thm}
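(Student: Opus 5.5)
The plan is to work directly from \eqref{1.3}, i.e. $n(\vz,S_{d,\bsgamma})=|A_{\vz,d}|$, and to estimate this count from below and above in terms of $j(\vz)$ and $d(\vz)$.

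\emph{Necessity.} Assume EXP-$(s,t)$-WT. Two families of eigenvalues, each exceeding $\vz^2$, give lower bounds.

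First, take $d=1$ and $\bsj=(j)$ with $j\le j(\vz)$. Since $\gamma_1\lz_j\ge\gamma_1\vz^2$ is not quite $>\vz^2$, I rescale: using $\vz'=\vz\sqrt{\gamma_1}$ gives $n(\vz\sqrt{\gamma_1},S_{1,\bsgamma})\ge j(\vz)$. Then $\ln j(\vz)=o((\ln\vz^{-1})^s)$ for every $s<1$ would follow. This does not yet give \eqref{1.10}. I need the stronger statement that $\ln j(\vz)$ is smaller than every power of $\ln\vz^{-1}$, which comes only after exploiting $d$. I return to it below.

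Second, take $d=d(\sqrt\vz)$, so that $\gamma_k>\vz$ for $k\le d$, and $\lz_j>\vz$ for $j\le j(\sqrt\vz)$. For any subset $\uu\subset[d]$ of size $m$ and indices $2\le j_k\le j(\vz^{1/(2m)})$, the product satisfies
$$\prod_{k\in\uu}\gamma_k\lz_{j_k}>\vz^{2}$$
provided each factor exceeds $\vz^{2/m}$. Choosing $\gamma$ and $\lz$ thresholds at $\vz^{1/m}$ (using $d(\vz^{1/(2m)})$ and $j(\vz^{1/(2m)})$), this gives
$$n(\vz,d)\ge \binom{d(\vz^{1/m})}{m}\,(j(\vz^{1/m})-1)^m .$$
Choosing $m=1$ and $d$ large already gives $\ln n\gtrsim\ln d(\vz)+\ln j(\vz)$. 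Choosing $m$ with $m\approx \ln\vz^{-1}/\ln\delta^{-1}$ (i.e. $\delta=\vz^{1/m}$ fixed and small) gives
$$\ln n\gtrsim m\ln j(\delta)\quad\text{with } d\ge m,$$
and dividing by $d^t+(\ln\vz^{-1})^s$ with $d=m$ forces $j(\delta)$ to be controlled.

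To get \eqref{1.11} I set $\eta=\vz^{1/m}$ and let $m=d(\eta)$. Then
$$\ln n\ge d(\eta)\ln(j(\eta)-1)\quad\text{at } \ln\vz^{-1}=m\ln\eta^{-1},$$
while the denominator is $d(\eta)^t+d(\eta)^s(\ln\eta^{-1})^s$. Since $t\le1$ and $s<1$, requiring the ratio to tend to $0$ yields exactly $d(\eta)^{1-s}\ln j(\eta)/(\ln\eta^{-1})^s\to0$, which is \eqref{1.11}. For \eqref{1.10}, I use $m$ free (with $d=m$ variables, which is allowed even when $\gamma_k$ are small, by replacing $\gamma$ thresholds via $d(\cdot)$). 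Optimizing $m$ against the denominator $m^t+(m\ln\eta^{-1})^s$ shows that $\ln j(\eta)\le \vz'(\ln\eta^{-1})^{\alpha}$ for all $\alpha>0$ eventually, which is equivalent to \eqref{1.10}. This optimization is where I expect the first difficulty: one must play $m$ against both $d(\cdot)$ and the $t$-term.

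\emph{Sufficiency.} Each $\bsj\in A_{\vz,d}$ has $|\uu(\bsj)|\le$ roughly $\ln\vz^{-2}/\ln(\gamma_{k}\lz_2)^{-1}$. Moreover $\uu(\bsj)\subset[d(\vz)]$, because every factor $\gamma_k\lz_{j_k}$ with $k\in\uu(\bsj)$ must exceed $\vz^2$, and each $j_k\le j(\cdot)$. I bound
$$|A_{\vz,d}|\le\sum_{m}\binom{d(\vz)}{m}\prod(\text{choices}),$$
and I split the indices: coordinates with factor at least $\delta$ are few (at most $\ln\vz^{-2}/\ln\delta^{-1}$), while the other factors are each at most $\delta$. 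This leads to an estimate of the form
$$\ln n\le C\,\frac{\ln\vz^{-1}}{\ln\delta^{-1}}\bigl(\ln d(\vz)+\ln j(\vz)\bigr)$$
together with a term $\min(d,d(\vz))\ln j(\delta)$. With $\delta$ chosen as a suitable power of $\vz$ (the trade-off between the two terms), \eqref{1.10} controls $\ln j$ against powers of $\ln\vz^{-1}$, and \eqref{1.11} controls the $d(\vz)^{1-s}$ interplay, giving $o(d^t+(\ln\vz^{-1})^s)$. The main obstacle is making this dyadic-level counting sharp enough that only the single product $d(\vz)^{1-s}\ln j(\vz)$ appears. I would try to organize the count by the level $\ell$ with $\lz_{k,j_k}\in(\vz^{2^{-\ell+1}},\vz^{2^{-\ell}}]$ and sum over levels, applying \eqref{1.11} at each scale $\vz^{2^{-\ell}}$.
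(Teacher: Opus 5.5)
\paragraph{Gap in the necessity of \eqref{1.10}.} This gap cannot be closed, because \eqref{1.10} is not necessary. The step ``with $d=m$ variables, which is allowed even when $\gamma_k$ are small'' is false. Every active coordinate carries its weight. In your construction ($\vz=\eta^{2m}$, $j_k\le j(\eta)$, each factor $\gamma_k\lambda_{j_k}>\eta^4$) the $m$ active coordinates must therefore lie in $[d(\eta)]$. For the same reason, fixing $\delta=\vz^{1/m}$ and letting $m\to\infty$ is impossible once $m>d(\delta)$. Under this cap your lower bounds give
$$\frac{m\ln j(\eta)}{m^t+(2m\ln\vz^{\,0}\ln\eta^{-1})^s}\le\frac{d(\eta)^{1-s}\ln j(\eta)}{2^s(\ln\eta^{-1})^s},\qquad m\le d(\eta),$$
that is, exactly \eqref{1.11} and nothing more. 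Without the cap the optimization proves too much: for fixed $\eta<\sqrt{\lambda_2}$ and $m\to\infty$ the ratio tends to $\ln j(\eta)\ge\ln 2$ if $t=1$, and to $\infty$ if $t<1$. That would exclude EXP-$(s,t)$-WT for every problem.

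\paragraph{A counterexample.} Take $\lambda_j=\exp(-(\ln j)^{2/s})$ and $\gamma_k=\exp(-e^{k})$. Then $\ln j(\vz)<(2\ln\vz^{-1})^{s/2}$ and $d(\vz)<\ln(2\ln\vz^{-1})$. Lemma~\ref{lem1} gives $\ln n(\vz,S_{d,\bsgamma})\le d(\vz)\ln j(\vz)=o((\ln\vz^{-1})^s)$ uniformly in $d$. For $\vz$ bounded away from $0$ the complexity stays bounded while $d^t\to\infty$. Hence EXP-$(s,t)$-WT holds and \eqref{1.11} holds, yet $\ln\ln j(\vz)/\ln\ln\vz^{-1}\to s/2\neq0$.

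The paper reaches \eqref{1.10} by a different route: it passes to $t_1>1$ and invokes \eqref{1.8} with $\eta=s(t_1-1)/(t_1-s)\to0$. Read as a necessary condition, \eqref{1.8} concerns the unweighted problem (equivalently, EXP-$(s,t_1)$-WT for all weights), not a fixed weight sequence. The same example has EXP-$(s,t_1)$-WT for every $t_1>1$ and violates \eqref{1.8} for $t_1\in(1,2-s)$. Your $d=1$ bound gives $\ln j(\vz)=o((\ln\vz^{-1})^s)$. It becomes \eqref{1.10} only when $s$ may be taken arbitrarily small, as in the EXP-UWT setting of Theorem~\ref{thm2}.

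\paragraph{Necessity of \eqref{1.11}.} Your derivation is the paper's argument (the lower bound of Lemma~\ref{lem1} at $d=d(\eta)$). It is fine after two small repairs. First, let all $j_k$ range over $\{1,\dots,j(\eta)\}$ on $[d(\eta)]$; otherwise the case $j(\eta)=2$ gives $\ln(j(\eta)-1)=0$. Second, use $\vz=\eta^{2m}$ rather than $\eta^{m}$, since $\gamma_k,\lambda_{j_k}>\eta^2$ only guarantees products above $\eta^{4m}$.

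\paragraph{Sufficiency.} This part is a sketch, not a proof. Your counting claim is reversed: it is the coordinates with factor at most $\delta$ that number at most $\ln\vz^{-2}/\ln\delta^{-1}$. The remaining active coordinates lie in $[d(\sqrt\delta)]$ with $j_k\le j(\sqrt\delta)$. The resulting bound
$$\frac{\ln\vz^{-2}}{\ln\delta^{-1}}\bigl(\ln d(\vz)+\ln j(\vz)\bigr)+d(\sqrt\delta)\ln j(\sqrt\delta)$$
does not close for any single $\delta$. The first term forces $\ln\delta^{-1}\gg(\ln\vz^{-1})^{1-s}$. But \eqref{1.11} at scale $\sqrt\delta$ only gives $d(\sqrt\delta)\ln j(\sqrt\delta)=o\bigl((d(\sqrt\delta)\ln\delta^{-1})^s\bigr)$, which in that range is not in general $o((\ln\vz^{-1})^s)$. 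You need the multi-scale count you mention but do not carry out. The paper does this over the dyadic coordinate blocks $\rho(i)$, using Lemmas~\ref{lem6} and~\ref{lem8}. If you follow that argument, note that its regularized weights satisfy $\widetilde\gamma_k\ge\gamma_k$. Hence $\widetilde d(\vz)\ge d(\vz)$, not $\le$, so its final comparison $2^{l_0}\lesssim d(\vz)^{1-s}$ needs a separate justification.
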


\vskip 3mm

From Theorem \ref{thm1} we obtain the following corollary.

\begin{cor}\label{cor}
Let  $0<s<1$ and $0<t\leq1$.

(1) If EXP-$(s,t)$-WT holds, then we have
\begin{equation}\label{1.12}
 \lim_{j\rightarrow\infty}\frac{\ln(\ln\frac{1}{\lambda_j})}{\ln(\ln j)}=\infty
 \end{equation}
 and
\begin{equation}\label{1.13}
\lim_{j\rightarrow\infty}\frac{\ln\frac{1}{\gamma_j}}{j^\frac{1-s}{s}}=\infty;
\end{equation}

(2) if \eqref{1.12} holds and
\begin{equation}\label{1.14}
\lim_{j\rightarrow\infty}\frac{\ln\frac{1}{\gamma_j}}{j^{\frac{1-s}{s}+\delta}}=\infty
\end{equation}
holds for some   $\delta>0$, then EXP-$(s,t)$-WT holds.
\end{cor}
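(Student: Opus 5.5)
The plan is to derive both parts of the corollary directly from Theorem \ref{thm1}. In each direction I will translate between the sequences $\{\lambda_j\}$, $\{\gamma_j\}$ and the counting functions $j(\vz)$, $d(\vz)$ using two elementary facts. First, if $\vz^2<\lambda_j$ then $j(\vz)\ge j$; second, if $j=j(\vz)$ then $\ln\frac1{\lambda_j}<2\ln\vz^{-1}$. The analogous facts hold for $\gamma_d$ and $d(\vz)$.

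\emph{Part (1).} Assume EXP-$(s,t)$-WT, so that \eqref{1.10} and \eqref{1.11} hold by Theorem \ref{thm1}. Condition \eqref{1.11} implicitly requires $d(\vz)<\infty$ for every $\vz$, hence $\gamma_j\to0$. To get \eqref{1.12}, for large $j$ I choose $\vz_j$ with $\vz_j^2=\lambda_j/2$. Then $j(\vz_j)\ge j$ and $\ln\vz_j^{-1}=\frac12(\ln 2+\ln\frac1{\lambda_j})$, so \eqref{1.10} gives
\[
\frac{\ln\ln j}{\ln\ln\frac1{\lambda_j}+O(1)}\le\frac{\ln\ln j(\vz_j)}{\ln\ln \vz_j^{-1}}\to0 .
\]
This is \eqref{1.12}. (If $\lambda_j=0$ for large $j$, then \eqref{1.12} is trivial with the convention $\ln\frac10=\infty$.) For \eqref{1.13}, I take $\vz_d^2=\gamma_d/2$, so that $d(\vz_d)\ge d$. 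For $d$ large we have $\vz_d^2<\lambda_2$, hence $j(\vz_d)\ge2$. Then \eqref{1.11} yields
\[
\frac{d^{1-s}\ln 2}{\bigl(\tfrac12\ln\tfrac2{\gamma_d}\bigr)^s}\to0 ,
\]
which, after raising to the power $1/s$, is exactly \eqref{1.13}.

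\emph{Part (2).} I verify \eqref{1.10} and \eqref{1.11} and then invoke Theorem \ref{thm1}.

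For \eqref{1.10}: fix $M>0$. By \eqref{1.12}, $\ln j\le(\ln\frac1{\lambda_j})^{1/M}$ for $j\ge j_0$. Putting $j=j(\vz)$, this gives $\ln j(\vz)\le(2\ln\vz^{-1})^{1/M}$ whenever $j(\vz)\ge j_0$; the case $j(\vz)<j_0$ is trivial. Hence $\limsup_{\vz\to0}\ln\ln j(\vz)/\ln\ln\vz^{-1}\le 1/M$, and since $M$ is arbitrary, \eqref{1.10} follows.

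For \eqref{1.11}: \eqref{1.14} gives $d^{\frac{1-s}{s}+\delta}\le\ln\frac1{\gamma_d}$ for $d\ge d_0$. With $d=d(\vz)$ this yields $d(\vz)\le(2\ln\vz^{-1})^{s/(1-s+s\delta)}$ (the bounded case is again trivial). Therefore
\[
d(\vz)^{1-s}\le C(\ln\vz^{-1})^{s-\kappa},\qquad \kappa:=s-\frac{s(1-s)}{1-s+s\delta}>0 .
\]
Combining this with $\ln j(\vz)\le(2\ln\vz^{-1})^{1/M}$ and choosing $M$ with $1/M<\kappa$, the ratio in \eqref{1.11} is $O((\ln\vz^{-1})^{1/M-\kappa})\to0$. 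Theorem \ref{thm1} now gives EXP-$(s,t)$-WT.

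The main obstacle is this exponent bookkeeping in (2). The extra $\delta$ in \eqref{1.14} is exactly what produces the positive slack $\kappa$, and \eqref{1.12} must be used in its strong quantitative form $\ln j(\vz)=(\ln\vz^{-1})^{o(1)}$ so that this slack absorbs the $\ln j(\vz)$ factor.
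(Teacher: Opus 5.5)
Your argument is correct and has the same skeleton as the paper's proof: both parts are reduced to Theorem~\ref{thm1}. The difference is in how you translate between the sequence conditions and the counting-function conditions.

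The paper cites Lemma~\ref{lem2} and Lemma~\ref{lem3} for these translations. You prove the needed directions inline:
\begin{itemize}
\item in part (1), by evaluating \eqref{1.10} and \eqref{1.11} along $\vz^2=\lambda_j/2$ and $\vz^2=\gamma_d/2$;
\item in part (2), by inserting $j=j(\vz)$ and $d=d(\vz)$ into the eventual inequalities and using $\ln\frac1{\lambda_{j(\vz)}}<2\ln\vz^{-1}$ and $\ln\frac1{\gamma_{d(\vz)}}<2\ln\vz^{-1}$.
\end{itemize}
This is shorter and more transparent than the proofs of those lemmas. It also covers the $\delta$-shifted form of Lemma~\ref{lem3} (exponent $1-s+s\delta$ in place of $1-s$), which the paper uses in part (2) without stating.

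More substantively, your exponent bookkeeping in part (2) repairs a faulty step in the paper. The paper writes
\[
\frac{d(\vz)^{1-s}\ln j(\vz)}{(\ln\vz^{-1})^s}=\frac{d(\vz)^{1-s+s\delta}}{(\ln\vz^{-1})^s}\cdot\frac{\ln j(\vz)}{(\ln\vz^{-1})^{s\delta}},
\]
but this is not an identity. The left-hand side equals the right-hand side multiplied by $\bigl(\ln\vz^{-1}/d(\vz)\bigr)^{s\delta}$, and this factor need not be bounded. For example, $\gamma_j=e^{-e^j}$ satisfies \eqref{1.14} and gives $d(\vz)\approx\ln\ln\vz^{-1}$. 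Your interpolation
\[
d(\vz)^{1-s}\lesssim(\ln\vz^{-1})^{s-\kappa},\qquad \kappa=\frac{s^2\delta}{1-s+s\delta}>0,
\]
combined with $\ln j(\vz)\le(2\ln\vz^{-1})^{1/M}$ for arbitrarily large fixed $M$, is the correct way to obtain \eqref{1.11} from \eqref{1.12} and \eqref{1.14}.
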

We remark that if $\lz_{j}=0$ for some $j\in\NN$, then \eqref{1.12} holds.

\begin{thm}\label{thm2}   EXP-UWT holds if and only if \eqref{1.12} holds and
 \begin{equation}\label{1.15}
\lim_{j\rightarrow\infty}\frac{\ln(\ln\frac{1}{\gamma_j})}{\ln j}=\infty.
\end{equation}
\end{thm}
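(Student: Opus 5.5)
We derive Theorem \ref{thm2} from Theorem \ref{thm1} and Corollary \ref{cor}. EXP-UWT means EXP-$(s,t)$-WT for all $s,t>0$. Since $n(\vz,S_{d,\bsgamma})$ is monotone in $\vz$ and $d$, EXP-$(s,t)$-WT for some pair implies it for every pair $(s',t')$ with $s'\ge s$, $t'\ge t$. Hence EXP-UWT is equivalent to EXP-$(s,t)$-WT holding for all $0<s<1$ and $0<t\le 1$. This is exactly the range covered by Theorem \ref{thm1} and Corollary \ref{cor}, so it suffices to compare \eqref{1.12}, \eqref{1.15} with the conditions of Corollary \ref{cor} taken over all $s\in(0,1)$.

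\emph{Necessity.} Assume EXP-UWT. Corollary \ref{cor}(1) with any $s$ gives \eqref{1.12}. It also gives \eqref{1.13} for every $s\in(0,1)$, i.e. $\ln\gamma_j^{-1}/j^{a}\to\infty$ for every $a=(1-s)/s\in(0,\infty)$. Since $a$ can be arbitrarily large, $\ln\ln\gamma_j^{-1}\ge a\ln j$ for all large $j$, for every $a$. Hence $\ln\ln\gamma_j^{-1}/\ln j\to\infty$, which is \eqref{1.15}.

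\emph{Sufficiency.} Assume \eqref{1.12} and \eqref{1.15}. Fix $s\in(0,1)$ and put $a=(1-s)/s+1$. By \eqref{1.15}, $\ln\ln\gamma_j^{-1}\ge (a+1)\ln j$ for large $j$, so $\ln\gamma_j^{-1}\ge j^{a+1}$. Thus $\ln\gamma_j^{-1}/j^{(1-s)/s+1}\ge j\to\infty$, which is \eqref{1.14} with $\delta=1$. Corollary \ref{cor}(2) then yields EXP-$(s,t)$-WT for all $t\in(0,1]$. Since $s\in(0,1)$ was arbitrary, the monotonicity reduction above gives EXP-UWT.

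The main point to check is that reduction. Suppose $s'\ge s$ and $t'\ge t$. For $d\ge1$ and $\vz<e^{-1}$ we have $d^{t'}+(\ln\vz^{-1})^{s'}\ge d^t+(\ln\vz^{-1})^s$. For $\vz\in[e^{-1},1]$ the complexity $n(\vz,S_{d,\bsgamma})\le n(e^{-1},S_{d,\bsgamma})$ is controlled by the same limit, because $\ln\vz^{-1}$ is bounded there. Hence the ratio for $(s',t')$ is at most the ratio for $(s,t)$, up to these harmless bounded ranges. If the case $t>1$ or $s\ge1$ needs separate care, it is already covered by the \cite{KPW} results, since \eqref{1.12} implies $\ln\lambda_j^{-1}/\ln j\to\infty$ and its powers.
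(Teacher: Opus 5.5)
Your proof is correct and follows the paper's own route. Necessity comes from Corollary~\ref{cor}(1) applied for every $s\in(0,1)$, and sufficiency comes from deriving \eqref{1.14} (with $\delta=1$) from \eqref{1.15} and then applying Corollary~\ref{cor}(2); the only difference is that you do by hand the exponent translation the paper delegates to Lemma~\ref{lem5}. Your explicit monotonicity reduction (EXP-$(s,t)$-WT implies EXP-$(s',t')$-WT for $s'\ge s$, $t'\ge t$) correctly justifies what the paper takes for granted when it writes ``i.e., EXP-$(s,t)$-WT holds for any $s,t\in(0,1)$,'' so your closing hedge via \cite{KPW} is unnecessary.
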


The paper is organized as follows. In Section 2 we give some
necessary preliminaries in the worst case setting. In Section 3,
we give the proofs of Theorems \ref{thm1} and \ref{thm2}.

\section{Premilinaries}\

For $\varepsilon \in (0,1)$ and
$\lim\limits_{j\to\infty}\lambda_j=0$, define
\begin{equation}
 j(\varepsilon) =  \max\{j \in \NN : \lambda_j > \varepsilon^2\}.
\end{equation}
Then $j(\vz)$ is well defined and always finite. We put $j(\vz)=0$
for $\vz\ge1$.  Since $\lambda_1=1$, we have $j(\vz)\ge1$ for
$\vz\in(0,1)$. Furthermore, $j(\vz)$ goes
to infinity if and only if all $\lambda_j$'s are positive.

We also assume that the weights satisfy
\begin{equation}\label{2.2}
1\ge \gamma_1 \ge \gamma_2 \ge \gamma_3 \ge \dots > 0.
\end{equation}
For $\varepsilon\in(0,1)$, define
$$d(\varepsilon) =  \sup\{d \in \NN : \gamma_d\,>\,\varepsilon^2\}.$$
Then $d(\vz)$ is well defined. We put $d(\vz)=0$ for
$\gamma_1\le\vz^2$, and note that $d(\vz)\ge1$ for
$\gamma_1>\vz^2$. Both~$j(\vz)$ and $d(\vz)$ are non-decreasing,
and $\lim\limits_{\varepsilon\to0}d(\varepsilon)=\infty$. If
$\lim\limits_{j\to\infty}\gamma_j=0$, then $d(\vz)<\infty$ for any
$\vz>0$.

The following lemma shows how the information complexity can be
bounded in terms of $j(\varepsilon)$ and $d(\varepsilon)$.

\begin{lem} {\rm (See \cite[Lemma 1]{KPW})}\label{lem1}
If
$\lim\limits_{j\to\infty}\lambda_j=\lim\limits_{j\to\infty}\gamma_j=0$,
then for $\varepsilon\in (0,1)$, we have
$$
n(\varepsilon,S_{d,\bsgamma}) \le
j(\varepsilon)^{\min(d,d(\varepsilon))}\le n(\varepsilon^{2d(\vz)},S_{d,\bsgamma}),
$$
and for $d\ge d(\vz)$,
$$n(\varepsilon,S_{d,\bsgamma}) =n(\varepsilon,S_{d(\vz),\bsgamma}).$$
\end{lem}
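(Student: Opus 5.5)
The statement immediately above is Lemma~\ref{lem1}, which the paper cites from \cite[Lemma 1]{KPW}. Here is how I would prove it directly from the representation \eqref{1.3}, $n(\vz,S_{d,\bsgamma})=|A_{\vz,d}|$.

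\textbf{Step 1: reduction to the first $d(\vz)$ coordinates.} Let $\bsj\in A_{\vz,d}$ and take any $k$ with $k>d(\vz)$ and $j_k\ge2$. Then
$$\lz_{d,\bsgamma,\bsj}\le \lz_{k,j_k}=\gamma_k\lz_{j_k}\le\gamma_k\le\vz^2,$$
because every factor is at most $1$. This contradicts $\bsj\in A_{\vz,d}$. Hence every $\bsj\in A_{\vz,d}$ has $j_k=1$ for all $k>d(\vz)$. Setting these coordinates to $1$ multiplies the eigenvalue by $\lz_{k,1}=1$. So the map $\bsj\mapsto(j_1,\dots,j_{d(\vz)})$ is a bijection from $A_{\vz,d}$ onto $A_{\vz,d(\vz)}$ when $d\ge d(\vz)$. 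This gives the identity $n(\vz,S_{d,\bsgamma})=n(\vz,S_{d(\vz),\bsgamma})$.

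\textbf{Step 2: the upper bound.} Put $m=\min(d,d(\vz))$. By Step 1, every $\bsj\in A_{\vz,d}$ is determined by $(j_1,\dots,j_m)$. For each coordinate we have
$$\lz_{j_k}\ge\lz_{k,j_k}\ge\lz_{d,\bsgamma,\bsj}>\vz^2,$$
since $\gamma_k\le1$ and all other factors are at most $1$. Therefore $j_k\le j(\vz)$ for each $k$. This yields
$$n(\vz,S_{d,\bsgamma})\le j(\vz)^{m}.$$

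\textbf{Step 3: the lower bound.} Take any $\bsj$ with $j_k\le j(\vz)$ for $k\le m$ and $j_k=1$ otherwise. For each $k\le m$ with $j_k\ge2$, we have $\lz_{k,j_k}=\gamma_k\lz_{j_k}$. Here $\gamma_k\ge\gamma_{d(\vz)}>\vz^2$ by monotonicity of the weights, and $\lz_{j_k}>\vz^2$ by the definition of $j(\vz)$. So each such factor exceeds $\vz^4$, and factors with $j_k=1$ equal $1$. Multiplying over at most $m\le d(\vz)$ coordinates gives
$$\lz_{d,\bsgamma,\bsj}>\vz^{4m}\ge \vz^{4d(\vz)}=(\vz^{2d(\vz)})^2.$$
Hence all $j(\vz)^m$ such indices lie in $A_{\vz^{2d(\vz)},d}$, so $j(\vz)^m\le n(\vz^{2d(\vz)},S_{d,\bsgamma})$.

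The hypotheses $\lz_j\to0$ and $\gamma_j\to0$ are used only to make $j(\vz)$ and $d(\vz)$ finite. There is no real obstacle in this argument. The only point needing care is the exponent bookkeeping in Step 3: $\vz^2$ for the weight times $\vz^2$ for the eigenvalue, taken over $d(\vz)$ factors, gives the threshold $(\vz^{2d(\vz)})^2$. The monotonicity of the weights is needed there so that $\gamma_k>\vz^2$ for all $k\le d(\vz)$.
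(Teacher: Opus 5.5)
Your proof is correct. The paper does not reprove this lemma; it only quotes \cite[Lemma 1]{KPW}. Your direct counting from \eqref{1.3} is the natural argument behind it: coordinates $k>d(\vz)$ are forced to equal $1$, each remaining $j_k$ satisfies $j_k\le j(\vz)$, and every index with $j_k\le j(\vz)$ on the first $\min(d,d(\vz))$ coordinates (and $j_k=1$ elsewhere) has eigenvalue above $\vz^{4\min(d,d(\vz))}$.

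The only point to flag is the degenerate case $d(\vz)=0$, i.e.\ $\vz^2\ge\gamma_1$. There the strict inequality in your Step 3 fails for $\bsj=(1,\dots,1)$. However, the lemma's right-hand inequality is itself false in that case, since $n(1,S_{d,\bsgamma})=0<1=j(\vz)^0$, so the statement tacitly assumes $d(\vz)\ge 1$.
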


\begin{lem}\label{lem2}
 Let $\{\lambda_j\}_{j\in\NN}$ be a non-increasing sequence and $\lim\limits_{j\rightarrow\infty}\lambda_j=0$. Then \eqref{1.10} holds  if and only if   \eqref{1.12} holds  if and only if
\begin{equation}\label{2.3}
\lim_{\vz\to 0}\frac{\ln j(\vz)}{(\ln {\vz}^{-1})^\delta}=0
\end{equation}
holds for any $\delta>0$.
\end{lem}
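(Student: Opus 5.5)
Throughout we use the duality between $\{\lambda_j\}$ and $j(\vz)$: for $j\in\NN$ and $\vz\in(0,1)$ we have $j(\vz)\ge j$ if and only if $\lambda_j>\vz^2$. We first dispose of the degenerate case where some $\lambda_{j_0}=0$. Then $j(\vz)\le j_0$ for all $\vz$, so \eqref{2.3} holds trivially. Condition \eqref{1.12} holds by the convention noted after Corollary \ref{cor}. Condition \eqref{1.10} also holds, since $\ln\ln j(\vz)$ stays bounded above while $\ln\ln\vz^{-1}\to\infty$; where $j(\vz)=1$, the quantity $\ln\ln j(\vz)=-\infty$ has to be read with the same convention. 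From now on assume all $\lambda_j>0$, so $j(\vz)\to\infty$ as $\vz\to0$ and, eventually, $j(\vz)\ge3$. In particular the ratio in \eqref{1.10} is eventually nonnegative.

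\emph{\eqref{1.10} $\Leftrightarrow$ \eqref{2.3}.}

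\begin{itemize}
\item Assume \eqref{1.10} and fix $\delta>0$. For small $\vz$ we have $\ln\ln j(\vz)\le \frac{\delta}{2}\ln\ln\vz^{-1}$, that is, $\ln j(\vz)\le(\ln\vz^{-1})^{\delta/2}$. Hence $\ln j(\vz)/(\ln\vz^{-1})^{\delta}\le(\ln\vz^{-1})^{-\delta/2}\to0$.
\item Conversely, \eqref{2.3} with a given $\delta$ yields $\ln j(\vz)\le(\ln\vz^{-1})^{\delta}$ for small $\vz$. Thus $0\le \ln\ln j(\vz)/\ln\ln\vz^{-1}\le\delta$ eventually. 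Since $\delta$ is arbitrary, the limit in \eqref{1.10} is $0$.
\end{itemize}

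\emph{\eqref{2.3} $\Rightarrow$ \eqref{1.12}.} Fix $\delta>0$ and take $j$ large. Choose $\vz$ with $\vz^2<\lambda_j$ and $\vz^2$ arbitrarily close to $\lambda_j$. Then $j(\vz)\ge j$, and \eqref{2.3} (for $j$ large, so $\vz$ small) gives $\ln j\le\ln j(\vz)\le(\ln\vz^{-1})^{\delta}$. Letting $\vz^2\uparrow\lambda_j$ gives $\ln j\le(\frac12\ln\frac1{\lambda_j})^{\delta}$. Hence
$$\ln\ln j\le \delta\ln\ln\tfrac1{\lambda_j}+O(1),$$
so $\liminf_{j\to\infty}\ln(\ln\frac1{\lambda_j})/\ln\ln j\ge 1/\delta$. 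Letting $\delta\to0$ gives \eqref{1.12}.

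\emph{\eqref{1.12} $\Rightarrow$ \eqref{2.3}.} Fix $\delta>0$ and choose $M>1/\delta$. By \eqref{1.12}, for all large $j$ we have $\ln\ln\frac1{\lambda_j}\ge M\ln\ln j$, i.e. $\ln j\le(\ln\frac1{\lambda_j})^{1/M}$. Apply this with $j=j(\vz)$, which tends to $\infty$. Since $\lambda_{j(\vz)}>\vz^2$, we get $\ln\frac1{\lambda_{j(\vz)}}<2\ln\vz^{-1}$. Therefore
$$\ln j(\vz)\le(2\ln\vz^{-1})^{1/M},$$
and so $\ln j(\vz)/(\ln\vz^{-1})^{\delta}\le 2^{1/M}(\ln\vz^{-1})^{1/M-\delta}\to0$.

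The only delicate points are the degenerate case and the limiting step $\vz^2\uparrow\lambda_j$. The latter is legitimate because $\lambda_j>0$ is fixed and \eqref{2.3} applies uniformly for all sufficiently small $\vz$.
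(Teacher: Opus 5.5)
Your proof is correct, but it is organized differently from the paper's. The paper proves the cycle \eqref{1.10}$\Rightarrow$\eqref{1.12}$\Rightarrow$\eqref{2.3}$\Rightarrow$\eqref{1.10}. You instead use \eqref{2.3} as a hub and prove \eqref{1.10}$\Leftrightarrow$\eqref{2.3} and \eqref{2.3}$\Leftrightarrow$\eqref{1.12}.

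\textbf{Common steps.} Your \eqref{2.3}$\Rightarrow$\eqref{1.10} is the same as the paper's, and \eqref{1.10}$\Rightarrow$\eqref{2.3} is a one-line exponentiation.

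\textbf{Getting into \eqref{1.12}.} This is the substantive difference. The paper goes directly from \eqref{1.10}: it introduces $v(\vz)=\lceil e^{(\ln\vz^{-1})^\eta}\rceil$, notes $\lambda_{v(\vz)}\le\vz^2$, and then varies $\vz$ so that $k=v(\vz)$ runs through all large integers, inverting to get $\lambda_k\le e^{-2(\ln(k-1))^{1/\eta}}$. You fix $j$ and use the duality $j(\vz)\ge j\iff\lambda_j>\vz^2$ with $\vz^2\uparrow\lambda_j$. This gives $\ln j\le(\frac12\ln\frac1{\lambda_j})^\delta$ in one stroke. It avoids having to argue that $v$ attains every large integer, and the $k-1$ bookkeeping.

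\textbf{The step \eqref{1.12}$\Rightarrow$\eqref{2.3}.} You evaluate the bound at $j=j(\vz)$ and use $\lambda_{j(\vz)}>\vz^2$, which is the device the paper itself uses in Lemma \ref{lem3}. The paper instead solves for a threshold $e^{(2\ln\vz^{-1})^{\delta/2}}$ beyond which $\lambda_j\le\vz^2$. Your version also avoids the stray exponent $s$ that appears in the paper's write-up of this step.

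\textbf{A minor imprecision.} In the degenerate case, ``bounded above'' is not quite enough to get \eqref{1.10}. You also need $\ln\ln j(\vz)$ bounded below. This holds because $j(\vz)\ge 2$ once $\vz^2<\lambda_2$, under the standing assumption $\lambda_2>0$.
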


\begin{proof}If $\lz_{j}=0$ for
some $j_0\in\NN$, then $\lambda_j=0$ for $j\ge j_0$ and hence
$j(\vz)\le j_0$. It follows that \eqref{1.10}, \eqref{1.12} and \eqref{2.3} hold in this case. Without loss of generality, we
assume that $\{\lz_j\}$ is a positive sequence.

We first show that \eqref{1.10} implies \eqref{1.12}. Assume that
\eqref{1.10} holds. For every $\eta\in(0,1)$, there exists an
$\vz_0\in(0,1)$ such that
$$\frac{\ln\ln j(\vz)}{\ln\ln \vz^{-1}}<\eta  \ \ \ {\rm for}\  \vz\in(0,\vz_0).$$
This implies
$$j(\vz)<e^{ (\ln\vz^{-1})^\eta}\le \lceil e^{ (\ln\vz^{-1})^\eta}\rceil :=v(\vz),$$
where $\lceil x\rceil$  is the ceiling of a real number $x$.  From the definition of $j(\vz)$, we
have $j(\vz)+1\le v(\vz)$ and
$$\lambda_{v(\vz)}\le \lambda_{j(\vz)+1}\le \vz^2.$$
Let $v(\vz)=k$. If we vary $\vz\in(0,\vz_0)$, then $k$ can take the
values
$$k= \lceil e^{ (\ln\vz_0^{-1})^\eta}\rceil,\lceil e^{ (\ln\vz_0^{-1})^\eta}\rceil+1,\cdots.$$
Since $k\le e^{(\ln\vz^{-1})^\eta}+1$, we get
$$\vz^{2}\le e^{-2\big(\ln (k-1)\big)^{1/\eta}}.$$
Hence, $$\lambda_{k}\le \vz^2\le e^{-2\big(\ln(k-1)\big)^{1/\eta}}\ \  \ {\rm for}\  k\ge \lceil e^{(\ln\vz_0^{-1})^\eta}\rceil. $$
 Taking the logarithm yields
  $$\frac{\ln\ln \frac{1}{\lambda_{k}}}{\ln\ln k}\ge \frac{\ln2+\frac{1}{\eta}\ln\ln(k-1)}{\ln\ln k}\to \infty
  \ \ \ {\rm as \ }k\to\infty, \ \eta\to 0.$$
Therefore, we obtain \eqref{1.12}.

Next, if $\eqref{1.12}$ holds, then for every $\delta\in(0,s)$ there exists a
$J>0$ such that
$$\ln\big(\ln\frac{1}{\lambda_j}\big)\geq \frac{2}{\delta}\ln(\ln j) \ \ \ {\rm for}\ j\geq J.$$
This implies $\lambda_j\le e^{-(\ln j)^{\frac{2}{\delta}}}$. We solve
the inequality $\vz^2\ge e^{-(\ln j)^{\frac{2}{\delta}}},$ and obtain the solution
$$j\ge e^{(2\ln\vz^{-1})^{\frac{\delta}{2}}}.$$
This concludes that if $$j\ge \max\{J,\ e^{(2\ln\vz^{-1})^{\frac{\delta}{2}}}\},$$
then $$\lz_j\le e^{-(\ln j)^{\frac{2}{\delta}}}\le \vz^2.$$
 From the definition of $j(\vz)$, we obtain
$$j(\vz)\le \max\{J,\ e^{(2\ln\vz^{-1})^{\frac{\delta}{2}}}\}.$$
Taking the logarithm yields that for every $\delta\in(0,s)$,
$$\lim_{\vz\to 0}\frac{\ln j(\vz)}{(\ln \vz^{-1})^s}\le \lim_{\vz\to 0}\frac{\max\{\ln J,\ (2\ln\vz^{-1})^{\frac{\delta}{2}}\}}{(\ln \vz^{-1})^\delta}=0.$$
Therefore, \eqref{2.3} holds.

Now assume that \eqref{2.3} holds, we want to show \eqref{1.10}.
From \eqref{2.3}, there exists an $\vz_0>0$ such that
$$\ln j(\vz)< \big(\ln {\vz}^{-1}\big)^\delta \ \ \ {\rm for }\  \vz\in (0,\vz_0).$$
Taking the logarithm we have
$$\frac{\ln\ln j(\vz)}{\ln\ln{\vz}^{-1} }< \delta.$$
Hence, \eqref{1.10} follows by letting $\delta\rightarrow0$.
Lemma \ref{lem2} is proved.
\end{proof}\vskip 1pc

\begin{lem}\label{lem3}Let $\{\gamma_j\}_{j\in\NN}$ be a non-increasing positive
sequence and $\lim\limits_{j\rightarrow\infty}\gamma_j=0$.
\begin{equation}\label{2.4}
\lim_{\vz\to 0}\frac{d(\vz)^{1-s}}{(\ln{\vz}^{-1})^s}=0
\end{equation}
 holds if and only if
\begin{equation}\label{2.5}
\lim_{j\rightarrow\infty}\frac{\big(\ln\frac{1}{\gamma_j}\big)^s}{j^{1-s}}=\infty.
\end{equation}

\end{lem}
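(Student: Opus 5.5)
The plan is to exploit the duality between $d(\vz)$ and the sequence $\gamma_j$, in the same spirit as the proof of Lemma \ref{lem2}. By definition, $\gamma_{d(\vz)}>\vz^2$ and $\gamma_{d(\vz)+1}\le\vz^2$. Since $0<s<1$, conditions \eqref{2.4} and \eqref{2.5} are equivalent after raising to the power $1/s$ to, respectively,
$$\lim_{\vz\to0}\frac{d(\vz)^{(1-s)/s}}{\ln\vz^{-1}}=0 \quad\text{and}\quad \lim_{j\to\infty}\frac{\ln\frac1{\gamma_j}}{j^{(1-s)/s}}=\infty .$$
I will work with these forms throughout and write $\al=(1-s)/s>0$.

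\textbf{Step 1: \eqref{2.5} implies \eqref{2.4}.} Fix $M>0$ and take $J$ such that $\ln\frac1{\gamma_j}\ge M j^{\al}$ for $j\ge J$. If $d(\vz)\ge J$, then $\gamma_{d(\vz)}>\vz^2$ gives
$$2\ln\vz^{-1}>\ln\frac1{\gamma_{d(\vz)}}\ge M d(\vz)^{\al}.$$
Hence $d(\vz)^{\al}/\ln\vz^{-1}<2/M$. If instead $d(\vz)<J$, the ratio is at most $J^{\al}/\ln\vz^{-1}\to0$. Letting $M\to\infty$ gives \eqref{2.4}.

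\textbf{Step 2: \eqref{2.4} implies \eqref{2.5}.} This is the step requiring care, because we must evaluate the sequence at every index $j$, not just at the values of $d(\vz)$. Fix $\eta>0$ and choose $\vz_0$ with $d(\vz)^{\al}<\eta\ln\vz^{-1}$ for $\vz<\vz_0$.

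For a large index $j$, define $\vz$ by $\vz^2=\gamma_j$. Since $\gamma_j\to0$, we have $\vz<\vz_0$ once $j$ is large. Then $\gamma_j=\vz^2$ is not $>\vz^2$. Because the weights are non-increasing, $\gamma_i\le\vz^2$ for all $i\ge j$, and therefore $d(\vz)\le j-1$. This is only an upper bound on $d(\vz)$, which is the wrong direction for what we need.

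To fix this, use a slightly larger threshold $\vz'$ with $\vz'^2<\gamma_j$, for instance $\vz'^2=\gamma_j/2$, or take $\vz'$ arbitrarily close to $\sqrt{\gamma_j}$ from below. Then $\gamma_j>\vz'^2$, so $d(\vz')\ge j$. Applying the bound from the choice of $\vz_0$,
$$j^{\al}\le d(\vz')^{\al}<\eta\ln\vz'^{-1}=\frac{\eta}{2}\Bigl(\ln\frac1{\gamma_j}+\ln2\Bigr).$$
Since $\ln\frac1{\gamma_j}\to\infty$, this gives $\ln\frac1{\gamma_j}/j^{\al}\ge 1/\eta-o(1)$. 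Letting $\eta\to0$ yields \eqref{2.5}.

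The main technical point is Step 2, specifically choosing $\vz'$ so that the strict inequality in the definition of $d$ yields a lower bound $d(\vz')\ge j$. The additive $\ln 2$ is negligible because $\gamma_j\to0$.
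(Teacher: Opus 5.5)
Your proof is correct. Step~1 is the paper's argument; the paper uses $d(\vz)\to\infty$ instead of splitting off the case $d(\vz)<J$, and keeps the exponents $s,1-s$ rather than raising to the power $1/s$. Step~2, however, follows a different and more direct route.

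The paper argues from the $\vz$ side. From \eqref{2.4} it gets $d(\vz)<v(\vz):=\lceil\delta^{1/(1-s)}(\ln\vz^{-1})^{s/(1-s)}\rceil$, so the upper defining property of $d$ yields $\gamma_{v(\vz)}\le\gamma_{d(\vz)+1}\le\vz^2$. It then has to observe that $k=v(\vz)$ takes every sufficiently large integer value as $\vz$ varies. Finally it inverts $k\le\delta^{1/(1-s)}(\ln\vz^{-1})^{s/(1-s)}+1$ to obtain $\gamma_k\le\exp\bigl(-2\delta^{-1/s}(k-1)^{(1-s)/s}\bigr)$.

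You instead start from the index $j$ and choose the threshold $\vz'^2=\gamma_j/2$, so the lower defining property $\gamma_j>\vz'^2$ gives $d(\vz')\ge j$ at once. This removes three things from the paper's argument: the ceiling function, the claim that $\vz\mapsto v(\vz)$ hits every large integer (which the paper asserts without comment, though it holds by continuity and monotonicity), and the inversion step. The only cost is the negligible additive $\ln 2$.

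Nothing is lost quantitatively. With $\delta=\eta^s$, your inequality actually gives $\ln\frac{1}{\gamma_j}/j^{(1-s)/s}\ge 2/\eta-o(1)$, which matches the paper's asymptotic constant $2^s/\delta$. You recorded the weaker $1/\eta-o(1)$, which is still enough.
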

\begin{proof}
Assume that \eqref{2.5} holds.  For every $\delta>0$, there exists a $j_0\in\NN$ such that
$$\big(\ln\frac{1}{\gamma_j}\big)^s>\frac{2}{\delta}\cdot j^{1-s} \ \ \ {\rm  for\ all}\ j\ge j_0.$$
 Notice that $\lim\limits_{\vz\to0}d(\vz)=\infty$. For the above $j_0$ and sufficiently small $\vz>0$, we have
$d(\vz)>j_0$. This yields that
$$\big(\ln\frac{1}{\gamma_{d(\vz)}}\big)^s>\frac{2}{\delta}\cdot d(\vz)^{1-s}.$$
Since $\gamma_{d(\vz)}> \vz^2$, we have
$$\ln\frac{1}{\gamma_{d(\vz)}}< \ln\vz^{-2}=2\ln\vz^{-1}.$$
Hence,
$$\frac{2}{\delta}\cdot d(\vz)^{1-s}<\big(\ln\frac{1}{\gamma_{d(\vz)}}\big)^s<\big(2\ln\vz^{-1}\big)^s.$$
This implies
$$\frac{d(\vz)^{1-s}}{\big(\ln\vz^{-1}\big)^s}<2^{s-1}\delta.$$
Since  $\delta$ can be arbitrary small, \eqref{2.4} follows.

Next, assume that \eqref{2.4} holds. For every $\delta>0$, there exists an $\vz_0\in(0,1)$ such that for $0<\vz\le\vz_0$,
$$\frac{d(\vz)^{1-s}}{(\ln\vz^{-1})^s}<\delta.$$
This implies
$$d(\vz)<\delta^{\frac1{1-s}}\big(\ln\vz^{-1}\big)^{\frac s{1-s}}\le \big\lceil \delta^{\frac1{1-s}}
\big(\ln\vz^{-1}\big)^{\frac s{1-s}}\big\rceil :=v(\vz).$$
 Then $d(\vz)+1\le v(\vz).$ From the definition of $d(\vz)$, we have
$$\gamma_{v(\vz)}\le \gamma_{d(\vz)+1}\le \vz^2.$$
Let $v(\vz)=k$, if we vary $\vz\in(0,\vz_0)$ then $k$ can take the
values
$$k= \big\lceil \delta^{\frac 1{1-s}}\big(\ln\vz_0^{-1}\big)^{\frac s{1-s}}\big\rceil,
\big\lceil\delta^{\frac1{1-s}}\big(\ln\vz_0^{-1}\big)^{\frac
s{1-s}}\big\rceil+1,\cdots.$$ Since
$k\le\delta^{\frac1{1-s}}\big(\ln\vz^{-1}\big)^{\frac s{1-s}}+1$,
we get
$$\vz^{2}\le e^{-2\big({k-1}\big)^{\frac {1-s}s}\delta^{-1/s}}.$$
Then
$$\gamma_{k}\le \vz^2\le  e^{-2\big({k-1}\big)^{\frac {1-s}s}\delta^{-1/s}}$$
for $k\ge \big\lceil\delta^{\frac
1{1-s}}\big(\ln\vz_0^{-1}\big)^{\frac s{1-s}}\big\rceil.$ Taking
the logarithm we obtain
 $$\frac{\big(\ln\frac{1}{\gamma_{k}}\big)^s}{k^{1-s}}\ge \frac{2^s}{\delta}\cdot\frac{(k-1)^{1-s}}{k^{1-s}}\to
 \frac{2^s}{\delta}\ \ \ {\rm as \ }k\to\infty.$$
 Since $\delta$ can be sufficiently small, we obtain \eqref{2.5}. Lemma \ref{lem3} is proved.
\end{proof}

\begin{lem}\label{lem4}
 Let $\{\lambda_j\}_{j\in\NN}$ be a non-increasing sequence and $\lim\limits_{j\rightarrow\infty}\lambda_j=0$. Then \eqref{1.12} holds if and only if for any
$\delta>0$,
\begin{equation}\label{2.6}
\lim_{j\rightarrow\infty}\frac{(\ln\frac{1}{\lambda_j})^\alpha}{\ln
j}=\infty\ \ \ \ {\rm for\ all}\ \alpha\in(0,\delta).
\end{equation}
\end{lem}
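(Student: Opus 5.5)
The proof reduces both conditions to one elementary comparison between $\ln\frac{1}{\lambda_j}$ and $\ln j$. When $\lambda_j=0$ for some $j$ both sides hold trivially: the left limit in \eqref{1.12} is $\infty$ by convention, and $\ln\frac1{\lambda_j}=\infty$ eventually in \eqref{2.6}. So we may assume $\lambda_j>0$ for all $j$ and set
$$a_j:=\ln\frac{1}{\lambda_j},$$
which is non-decreasing and tends to $\infty$. Since $\ln(a_j^\alpha)-\ln\ln j=\alpha\ln a_j-\ln\ln j$, the statement will follow from rewriting \eqref{1.12} as $\ln a_j/\ln\ln j\to\infty$ and \eqref{2.6} as $\alpha\ln a_j-\ln\ln j\to\infty$.

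\emph{Direction \eqref{1.12} $\Rightarrow$ \eqref{2.6}.} Fix $\delta>0$ and $\alpha\in(0,\delta)$. By \eqref{1.12} there is $J$ such that $\ln a_j\ge \frac{2}{\alpha}\ln\ln j$ for $j\ge J$. Then
$$\alpha\ln a_j-\ln\ln j\ge \ln\ln j\to\infty,$$
so $a_j^\alpha/\ln j\ge \ln j\to\infty$, which is \eqref{2.6}. Note that this argument works for every $\alpha>0$, not only for $\alpha<\delta$.

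\emph{Direction \eqref{2.6} $\Rightarrow$ \eqref{1.12}.} This is where the quantifiers need care. Read literally, ``for any $\delta>0$, \eqref{2.6} holds for all $\alpha\in(0,\delta)$'' means \eqref{2.6} holds for every $\alpha>0$, and only this reading makes the equivalence true. Under it, fix $M>0$ and take $\alpha=1/M$. Then $a_j^{1/M}\ge \ln j$ for large $j$, i.e.\ $\ln a_j\ge M\ln\ln j$, so
$$\liminf_{j\to\infty}\frac{\ln a_j}{\ln\ln j}\ge M.$$
Letting $M\to\infty$ gives \eqref{1.12}.

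The main obstacle is therefore not analytic but the interpretation of the quantifier on $\delta$. If one read it as ``for some $\delta$'', it would only give a finite lower bound $1/\alpha$ on the ratio $\ln a_j/\ln\ln j$, and \eqref{1.12} would not follow. I would state explicitly that \eqref{2.6} is required for arbitrarily small $\alpha$, that is, for all $\alpha>0$.
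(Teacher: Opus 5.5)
Your proof is correct and is essentially the paper's argument: the same reduction to $\lambda_j>0$, the same choice $\ln\ln\frac1{\lambda_j}\ge\frac{2}{\alpha}\ln\ln j$ for the forward direction, and for the converse the same bound $\ln\ln\frac1{\lambda_j}/\ln\ln j\ge 1/\alpha$ followed by letting $\alpha\to0$. One correction to your closing remark: the quantifier on $\delta$ does not matter, because even for a single fixed $\delta$ the phrase ``for all $\alpha\in(0,\delta)$'' already gives arbitrarily small $\alpha$, and \eqref{2.6} for one $\alpha$ implies it for every larger $\alpha$; the reading that would break the equivalence is ``for some $\alpha$'', not ``for some $\delta$''.
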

\begin{proof}
Since $\lambda_j$ is non-increasing and
$\lim\limits_{j\rightarrow\infty}\lambda_j=0$. If $\lz_{j}=0$ for
some $j_0\in\NN$, then for $j\ge j_0$, we have $\ln\frac{1}{\lambda_j}=\infty$, and hence both \eqref{2.6} and
\eqref{1.12} hold.  Without loss of generality  we assume that
$\{\lz_j\}$ is a positive sequence.

 Assume that $\eqref{2.6}$ holds. There exists a $J_1>0$ such that
$$\big(\ln \frac{1}{\lambda_j}\big)^\alpha\geq \ln j \ \ \ {\rm for }\  j\geq J_1.$$
Taking the logarithm, we have
$$\frac{\ln(\ln\frac{1}{\lambda_j})}{\ln(\ln j)}\geq\frac{1}{\alpha}.$$
Now $\eqref{1.12}$ follows by letting $\alpha\rightarrow0$.

On the other hand, if $\eqref{1.12}$ holds, then for
any $\alpha\in(0,\delta)$, there exists a $J_2$ such that
$$\ln(\ln\frac{1}{\lambda_j})\geq \frac 2\alpha \ln(\ln j) \ \ \ {\rm for}\ j\geq J_2.$$
This implies
$$\frac{(\ln\frac{1}{\lambda_j})^{\alpha}}{\ln j}\geq\ln j \ \ \ {\rm for} \ j\geq J_2.$$
Inequality $\eqref{2.6}$ follows as $j\to\infty$.
\end{proof}

Using the same method, we can obtain the following lemma.

\begin{lem}\label{lem5}
\indent Let $\{\gamma_j\}_{j\in\NN}$ be a non-increasing positive
sequence and $\lim\limits_{j\rightarrow\infty}\gamma_j=0$. Then
\eqref{1.15} holds if and only if for any $\delta>0$,
$$\lim_{j\rightarrow\infty}\frac{(\ln\frac{1}{\gamma_j})^\alpha}{j}=\infty\ \ \ \ {\rm for\ all}\ \alpha\in(0,\delta).$$
\end{lem}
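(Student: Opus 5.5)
The plan is to copy the argument of Lemma \ref{lem4}, with $\ln j$ replaced by $j$ and $\ln\ln j$ replaced by $\ln j$. Since $\gamma_j\to0$ and $\gamma_j>0$, we have $\ln\frac1{\gamma_j}>1$ for all sufficiently large $j$. Hence every logarithm below is well defined and positive, and there is no degenerate case like $\lambda_j=0$ to treat separately.

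First I show that the condition in the lemma implies \eqref{1.15}. Fix any $\alpha\in(0,\delta)$. The limit being $\infty$ gives some $J_1$ with
$$\Big(\ln\frac1{\gamma_j}\Big)^\alpha\ge j \quad\text{for } j\ge J_1.$$
Taking logarithms gives $\alpha\ln\ln\frac1{\gamma_j}\ge\ln j$, that is,
$$\frac{\ln(\ln\frac1{\gamma_j})}{\ln j}\ge\frac1\alpha \quad\text{for } j\ge J_1.$$
So the $\liminf$ of this ratio is at least $1/\alpha$. Letting $\alpha\to0$ yields \eqref{1.15}.

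Conversely, assume \eqref{1.15} and fix $\delta>0$ and $\alpha\in(0,\delta)$. By \eqref{1.15} there is $J_2$ with
$$\ln\Big(\ln\frac1{\gamma_j}\Big)\ge\frac2\alpha\ln j \quad\text{for } j\ge J_2.$$
Exponentiating gives $\ln\frac1{\gamma_j}\ge j^{2/\alpha}$, hence $(\ln\frac1{\gamma_j})^\alpha\ge j^2$. It follows that
$$\frac{(\ln\frac1{\gamma_j})^\alpha}{j}\ge j\to\infty,$$
which is the required limit for every $\alpha\in(0,\delta)$.

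There is no genuine obstacle here. The only point needing care is the choice of the factor $2/\alpha$ in the second direction: it leaves a spare power of $j$ that forces divergence. I also need positivity of $\ln\frac1{\gamma_j}$ for large $j$, so that taking logarithms and powers preserves the inequalities; this follows from $\gamma_j\to0$.
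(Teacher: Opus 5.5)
Your proof is correct and takes the paper's approach: the paper justifies this lemma only by saying ``using the same method'' as Lemma~\ref{lem4}, and your argument carries out that transcription, with the factor $2/\alpha$ giving $(\ln\frac1{\gamma_j})^\alpha\ge j^2$ just as the paper's proof of Lemma~\ref{lem4} gets $(\ln\frac1{\lambda_j})^\alpha\ge(\ln j)^2$. You also check explicitly that $\ln\frac1{\gamma_j}>0$ for large $j$, so every logarithm is defined; the paper leaves this implicit.
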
\vskip 1pc

Next, we partition the set $[d]$. Without loss of
generality, we assume that $d\ge8$. Set
$$\rho(i)=\big\{k\in\NN: 2^{i-1}\le k \le2^i-1\big\}, \ \ \ {\rm for} \ i=1,\cdots,\lfloor\log_2d\rfloor-1,$$
and
$$\rho\big(\lfloor\log_2d\rfloor\big)=\big\{k\in\NN:2^{\lfloor\log_2d\rfloor-1}\le k\le d\big\}.$$
Then
\begin{equation}\label{2.7}
[d]=\bigcup_{i=1}^{\lfloor\log_2d\rfloor}\rho(i).
\end{equation}
We have
$$2^{i-1}\le|\rho(i)|<2^{i+1}, \ i=1,\cdots,\lfloor\log_2d\rfloor,$$
where $|A|$ denotes the number of the elements in set $A$.
 Let $C_n^k$ represent the number of combinations of $k$ elements selected from $n$ elements, i.e.,
 $$C_n^k=\frac{n!}{k!(n-k)!}.$$
For $\vz\in(0,1)$, set
$$A_{\rho(i)}(\vz):=\big\{\bsj\in\NN^{|\rho(i)|}:
\prod\limits_{k\in\rho(i)}\lambda_{2^{i-1},j_k}>\vz^2\big\}.$$

\begin{lem}\label{lem6}  We have
\begin{equation}\label{2.8}
\big|A_{\rho(i)}(\vz)\big|
\le C_{|\rho(i)|}^{a_i(\vz)}\, j\Big(\frac{\vz}{\sqrt{\gamma_{2^{i-1}}}}\Big)^{a_i(\vz)},
\end{equation}
where $a_i(\vz):=\min\{|\rho(i)|, \lceil\frac{\ln\vz^{-2}}{\ln(\gamma_{2^{i-1}}\lambda_2)^{-1}}\rceil-1\}.$
\end{lem}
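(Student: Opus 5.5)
The plan is a direct counting argument. The only thing to notice is that inside the block $\rho(i)$ every coordinate carries the same weight $\gamma:=\gamma_{2^{i-1}}$. The univariate eigenvalues entering $A_{\rho(i)}(\vz)$ are therefore $\lambda_{2^{i-1},1}=1$ and $\lambda_{2^{i-1},j}=\gamma\lambda_j$ for $j\ge2$, by \eqref{1.1}. Write $n:=|\rho(i)|$, $J:=j(\vz/\sqrt{\gamma})$ and $a:=a_i(\vz)$. For $\bsj\in A_{\rho(i)}(\vz)$ let $\uu(\bsj)=\{k: j_k\ge2\}$ and $u:=|\uu(\bsj)|$.

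\emph{Step 1: bound on the support size.} Each factor with $j_k\ge2$ is $\gamma\lambda_{j_k}\le\gamma\lambda_2<1$, and the other factors equal $1$. Hence $(\gamma\lambda_2)^u\ge\prod_k\lambda_{2^{i-1},j_k}>\vz^2$. This gives
$$u<\frac{\ln\vz^{-2}}{\ln(\gamma\lambda_2)^{-1}}.$$
Since $u$ is an integer, $u\le\lceil\ln\vz^{-2}/\ln(\gamma\lambda_2)^{-1}\rceil-1$. This holds whether or not the ratio is an integer. Trivially also $u\le n$, so $u\le a$. The ratio is positive, so $a\ge0$.

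\emph{Step 2: bound on each active coordinate.} If $j_k\ge2$, then all other factors are at most $1$, so $\gamma\lambda_{j_k}>\vz^2$. That is, $\lambda_{j_k}>(\vz/\sqrt\gamma)^2$, and by the definition of $j(\cdot)$ we get $j_k\le J$.

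\emph{Step 3: counting.} Rather than summing $\sum_{u\le a}C_n^u(J-1)^u$ and comparing, I will use a surjection. Consider the map sending a pair $(T,(m_k)_{k\in T})$, with $T\subset\rho(i)$, $|T|=a$ and $m_k\in\{1,\dots,J\}$, to the vector that equals $m_k$ on $T$ and $1$ off $T$. By Steps 1 and 2, every $\bsj\in A_{\rho(i)}(\vz)$ is in the image. Indeed, $|\uu(\bsj)|\le a\le n$, so $\uu(\bsj)$ can be enlarged to an $a$-set $T$, and on $T$ all entries lie in $\{1,\dots,J\}$. Hence $|A_{\rho(i)}(\vz)|\le C_n^a J^a$, which is \eqref{2.8}.

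The only delicate point is the degenerate case $\vz^2\ge\gamma$, where $J=0$. In that case $\gamma\lambda_2\le\vz^2$, so the ratio is at most $1$ and $a=0$. The set then consists only of $\bsj=(1,\dots,1)$, and the bound reads $C_n^0\cdot0^0=1$ under the usual convention $0^0=1$. I will state this convention explicitly. Otherwise $\vz/\sqrt\gamma<1$ gives $J\ge1$, and no obstacle arises.
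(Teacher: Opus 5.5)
Your proof is correct and follows the paper's argument: you bound the number of coordinates with $j_k\ge 2$ via $(\gamma_{2^{i-1}}\lambda_2)^m>\vz^2$, bound each such $j_k$ by $j(\vz/\sqrt{\gamma_{2^{i-1}}})$, and count by choosing an $a_i(\vz)$-subset; your surjection and your explicit treatment of the case $\vz^2\ge\gamma_{2^{i-1}}$ make rigorous what the paper states informally (``for sufficiently small $\vz$''). One minor inaccuracy is that $\gamma_{2^{i-1}}\lambda_2<1$ is not automatic (e.g.\ $i=1$, $\gamma_1=\lambda_2=1$), but in that case the ratio in $a_i(\vz)$ is $+\infty$, so $a_1(\vz)=|\rho(1)|$ and Step 1 holds trivially.
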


\begin{proof} For $\bsj\in A_{\rho(i)}(\vz)$,  set $m=|\{k\in\rho(i) : j_k\ge 2\}|$. Then
$$(\gamma_{2^{i-1}}\cdot \lambda_2)^m\ge\prod_{k\in\rho(i)}\lambda_{2^{i-1},j_k}>\vz^2.$$
It implies that
$$m\le \big\lceil\frac{\ln\vz^{-2}}{\ln(\gamma_{2^{i-1}}\lambda_2)^{-1}}\big\rceil-1.$$
Then we have
\begin{equation*}
m\le \min\big\{|\rho(i)|,
\big\lceil\frac{\ln\vz^{-2}}{\ln(\gamma_{2^{i-1}}\lambda_2)^{-1}}\big\rceil-1\big\}=a_i(\vz),
\end{equation*}
i.e., there are at most $a_i(\vz)$ indices $j_k$ which satisfy
$j_k\ge 2$.  For sufficiently small $\vz>0$, if there exists a
$j_k>j\Big(\frac{\vz}{\sqrt{\gamma_{2^{i-1}}}}\Big)$ for some
$k\in\rho(i)$, then we get
 $$\prod_{k\in\rho(i)}\lambda_{2^{i-1},j_k}\le \gamma_{2^{i-1}}\cdot
 \lambda_{j_k}\le\vz^2,$$which means that $\bsj\not\in
 A_{\rho(i)}(\vz)$.
Hence,  for $\bsj\in A_{\rho(i)}(\vz)$ there are at most
\begin{equation*}
C_{|\rho(i)|}^{|\rho(i)|-a_i(\vz)}=C_{|\rho(i)|}^{a_i(\vz)}
=\frac{|\rho(i)|!}{a_i(\vz)!\big(|\rho(i)|-a_i(\vz)\big)!}
\end{equation*}
ways to select the $|\rho(i)|-a_i(\vz)$ indices $j_k$ that must be
equal to 1, and in the remaining $a_i(\vz)$ indices $j_k$ of
$\rho(i)$ only
$$j_k\in\Big\{1, 2, \cdots, j\Big(\frac{\vz}{\sqrt{\gamma_{2^{i-1}}}}\Big)\Big\}$$ may belong to $A_{\rho(i)}(\vz).$
Then \eqref{2.8} follows. Lemma \ref{lem6} is proved.
\end{proof}

Furthermore, we have the more delicate estimate. However, we do
not use it in this paper.
\begin{lem}\label{lem7}
We have
\begin{equation}\label{2.9}
\big|A_{\rho(i)}(\vz)\big|\le 1+ \sum_{m=1}^{a_i(\vz)}\frac{|\rho(\vz)|!}{(|\rho(\vz)|-m)!}\prod_{k=1}^{m}j\bigg(\Big(\frac{\vz}{\sqrt{(\gamma_{2^{i-1}})^m\lambda_2^{m-k}}}\Big)^{\frac{1}{k}}\bigg).
\end{equation}
\end{lem}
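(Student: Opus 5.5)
The plan is to refine the counting in Lemma~\ref{lem6}. Instead of bounding every nontrivial coordinate by the same quantity, I will sort the nontrivial coordinates by size and bound each one separately. (I read $|\rho(\vz)|$ in \eqref{2.9} as $|\rho(i)|$.)

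\emph{Step 1: reduction.} Write $\gamma:=\gamma_{2^{i-1}}$. Every $k\in\rho(i)$ satisfies $k\ge 2^{i-1}$, and the weights are non-increasing. Hence for $j_k\ge2$ the factor $\lambda_{2^{i-1},j_k}$ appearing in $A_{\rho(i)}(\vz)$ equals $\gamma\lambda_{j_k}$, and for $j_k=1$ it equals $1$. For $\bsj\in A_{\rho(i)}(\vz)$ let $m$ be the number of coordinates with $j_k\ge2$.
\begin{itemize}
\item Exactly as in the proof of Lemma~\ref{lem6}, $(\gamma\lambda_2)^m>\vz^2$ forces $m\le a_i(\vz)$.
\item The case $m=0$ is the single point $(1,\dots,1)$, which gives the summand $1$.
\end{itemize}
So it suffices to show that, for each fixed $1\le m\le a_i(\vz)$, the number of $\bsj$ with exactly $m$ nontrivial coordinates is at most the $m$-th summand in \eqref{2.9}.

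\emph{Step 2: an injective encoding.} Fix such a $\bsj$ and list its nontrivial coordinates as $j_{p_1},\dots,j_{p_m}$ with distinct positions $p_1,\dots,p_m\in\rho(i)$. Order them so that $l_r:=j_{p_r}$ satisfies $l_1\le l_2\le\dots\le l_m$, breaking ties by the natural order of positions. Send $\bsj$ to the pair consisting of the ordered position tuple $(p_1,\dots,p_m)$ and the non-decreasing value tuple $(l_1,\dots,l_m)$. This map is injective, because $\bsj$ is recovered by placing $l_r$ at position $p_r$ and $1$ elsewhere. The number of ordered tuples of distinct positions is $|\rho(i)|!/(|\rho(i)|-m)!$. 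This crude count ignores the ordering constraint, which is harmless for an upper bound.

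\emph{Step 3: bounding the sorted values.} It remains to show that $l_{m-k+1}\le j\big((\vz/\sqrt{\gamma^m\lambda_2^{m-k}})^{1/k}\big)$ for each $k=1,\dots,m$. I would argue as follows.
\begin{itemize}
\item Since $\lambda$ is non-increasing, the $k$ factors $\lambda_{l_{m-k+1}},\dots,\lambda_{l_m}$ are each at most $\lambda_{l_{m-k+1}}$.
\item The remaining $m-k$ factors are each at most $\lambda_2$, because their indices are $\ge2$.
\item Therefore
\[
\vz^2<\gamma^m\prod_{r=1}^m\lambda_{l_r}\le \gamma^m\lambda_2^{m-k}\lambda_{l_{m-k+1}}^k ,
\]
that is, $\lambda_{l_{m-k+1}}>\big(\vz^2/(\gamma^m\lambda_2^{m-k})\big)^{1/k}$.
\item By the definition of $j(\cdot)$, this gives the bound on $l_{m-k+1}$ stated above. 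If the argument of $j$ is $\ge1$, then $j=0$ and there is no such $\bsj$, which is consistent with the convention $j(\vz)=0$ for $\vz\ge1$.
\end{itemize}
Multiplying these $m$ independent upper bounds bounds the number of admissible value tuples by $\prod_{k=1}^m j\big((\vz/\sqrt{\gamma^m\lambda_2^{m-k}})^{1/k}\big)$.

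Combining Steps 2 and 3 and summing over $m$ from $1$ to $a_i(\vz)$, plus the $1$ from $m=0$, gives \eqref{2.9}. The only delicate point is the choice of sorting in Step 2. The ordering must be fixed so that the bound on the $k$-th smallest eigenvalue uses the $k$ smallest factors together with $\lambda_2$ for the others, while keeping the encoding injective. Breaking ties by position handles the injectivity.
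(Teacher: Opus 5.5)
Your proof is correct and follows the paper's argument. You split by the number $m\le a_i(\vz)$ of nontrivial coordinates, bound the $k$-th largest index via $\vz^2<\gamma_{2^{i-1}}^m\lambda_2^{m-k}\lambda_{j_{k,\max}}^k$, and count $C_{|\rho(i)|}^m\cdot m!=|\rho(i)|!/(|\rho(i)|-m)!$ placements; your injective encoding with position-based tie-breaking simply makes the paper's informal ``$m!$ arrangements'' count rigorous. (One small point: the appeal to monotonicity of the weights in Step 1 is unnecessary, since $\lambda_{2^{i-1},j_k}=\gamma_{2^{i-1}}\lambda_{j_k}$ for $j_k\ge 2$ holds directly by the definition of $A_{\rho(i)}(\vz)$.)
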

\begin{proof} For $\bsj\in A_{\rho(i)}(\vz)$,  set $m_{\bsj}=|\{k\in\rho(i) : j_k\ge 2\}|$.
It follows from  the proof of Lemma \ref{lem6} that
 $$m_{\bsj}\le a_i(\vz):=\min\{|\rho(i)|,
\lceil\frac{\ln\vz^{-2}}{\ln(\gamma_{2^{i-1}}\lambda_2)^{-1}}\rceil-1\}.$$For
$1\le m\le a_i(\vz)$, we put $$  A_{\rho(i)}^m(\vz)=\{\bsj\in
\NN^{|\rho(i)|} : \bsj\in A_{\rho(i)}(\vz) \ {\rm and}\
m_{\bsj}=m\}.$$
 Clearly,
$$\big|A_{\rho(i)}(\vz)\big|=1+\sum_{m=1}^{a_i(\vz)} \big|A_{\rho(i)}^m(\vz)\big|.$$

For $\bsj\in A_{\rho(i)}^m(\vz)$,  let $j_{1,max}$ be the largest
index of the eigenvalues in the product
$\prod\limits_{k\in\rho(i)}\lambda_{2^{i-1},j_k}.$ Since
$$(\gamma_{2^{i-1}})^m \cdot \lambda_{j_{1,max}}\lambda_2^{m-1} \ge \prod_{k\in\rho(i)}\lambda_{2^{i-1},j_k}>\vz^2,$$
we obtain that
$$j_{1,max}\le \max\Big\{j:\lambda_j>\frac{\vz^2}{(\gamma_{2^{i-1}})^m\lambda_2^{m-1}}\Big\}
=j\Big(\frac{\vz}{{(\gamma_{2^{i-1}})^{m/2}\lambda_2^{(m-1)/2}}}\Big).$$
Let $j_{k,max}$ be the $k$-th largest index of the eigenvalues in
the product $\prod\limits_{k\in\rho(i)}\lambda_{2^{i-1},j_k},$
$k=2,\cdots,m$. Since
$$(\gamma_{2^{i-1}})^m \cdot \lambda_{j_{k,max}}^k\lambda_2^{m-k}\ge  \prod_{k\in\rho(i)}\lambda_{2^{i-1},j_k}
>\vz^2,$$
we obtain
\begin{equation*}
j_{k,max}\le
\max\Big\{j:\lambda_j>\Big(\frac{\vz^2}{(\gamma_{2^{i-1}})^m\lambda_2^{m-k}}\Big)^{\frac{1}{k}}\Big\}
=
j\bigg(\Big(\frac{\vz}{{(\gamma_{2^{i-1}})^{m/2}\lambda_2^{m/2-k/2}}}\Big)^{\frac{1}{k}}\bigg).
\end{equation*}

Hence, for $\bsj\in A_{\rho(i)}^m(\vz)$, there are
$C_{|\rho(i)|}^{m}$ ways to select the $|\rho(i)|-m$ indices that
must be equal to 1, and the remaining indices that must be greater
than 1.  For $k=1,2,\cdots,m$, only  the $k$-th largest indices
$$j_{k,max}\in\bigg\{2, \cdots, j\bigg(\Big(\frac{\vz}
{{(\gamma_{2^{i-1}})^{m/2}\lambda_2^{m/2-k/2}}}\Big)^{\frac{1}{k}}\bigg)\bigg\},\
$$ may belong to $A_{\rho(i)}^m(\vz)$. And there are $m!$
arrangements for the remaining $m$ indices. Therefore, we obtain
$\eqref{2.9}$.
\end{proof}

\begin{lem}\label{lem8}Let $C_m^s$ be the combination number. Then we have
\begin{equation}\label{2.10}
\ln C_m^s\le s\ln\frac{em}{s}.
\end{equation}
\end{lem}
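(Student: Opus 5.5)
We want $\ln C_m^s\le s\ln\frac{em}{s}$ for integers $1\le s\le m$. For $s=0$ we read the right-hand side as $0$, its limit as $s\to 0^+$, and then both sides vanish.

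The plan is two elementary estimates combined.

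\emph{Upper bound on the binomial coefficient.} Since
$$C_m^s=\frac{m(m-1)\cdots(m-s+1)}{s!}$$
and each factor in the numerator is at most $m$, we get $C_m^s\le m^s/s!$.

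\emph{Lower bound on $s!$.} We use $s!\ge (s/e)^s$. This follows from the series for the exponential:
$$e^s=\sum_{k\ge0}\frac{s^k}{k!}\ge \frac{s^s}{s!}.$$

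\emph{Combining.} Putting the two bounds together gives
$$C_m^s\le \frac{m^s e^s}{s^s}=\Big(\frac{em}{s}\Big)^s.$$
Taking logarithms yields \eqref{2.10}.

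There is no genuine obstacle. The only point needing care is the convention for $s=0$, and for $s>m$ (where $C_m^s=0$) the inequality is either vacuous or excluded. In the later application $s$ will be $a_i(\vz)\le|\rho(i)|$, so the inequality is used only in the range $1\le s\le m$.
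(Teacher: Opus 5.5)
Your proof is correct and follows the same route as the paper: bound $C_m^s\le m^s/s!$, then use $s!\ge(s/e)^s$ to get $(em/s)^s$. The only difference is that you justify $s!\ge (s/e)^s$ via the series bound $e^s\ge s^s/s!$ instead of appealing to Stirling's formula; your version is fully elementary and holds for every $s$. Your remark on $s=0$ (where $C_m^0=1$) is also worthwhile, since $a_i(\vz)=0$ can occur in the application.
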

\begin{proof}
From the Stirling's formula, we have
$$s!\ge \big(\frac{s}{e}\big)^s.$$
It follows that
$$C_m^s=\frac{m!}{s!(m-s)!}\le \frac{m^s}{s!}\le \frac{m^s}{(\frac{s}{e})^s}=\big(\frac{em}{s}\big)^s.$$
Taking the logarithm, we get \eqref{2.10}.
\end{proof}

\section{Proofs of Theorems \ref{thm1} and \ref{thm2}}

\

\noindent{\it Proof of Theorem \ref{thm1}.}

 Assume that EXP-$(s,t)$-WT holds for
fixed $s<1$ and $t\leq1$. We want to show that \eqref{1.10} and \eqref{1.11} holds. We know that EXP-$(s,t)$-WT  holds also for
this $s$ and any $t_1>1$. From \eqref{1.8} we obtain that
\begin{equation}\label{3.1}\lim_{j\rightarrow\infty}\frac{(\ln\frac{1}{\lambda_j})^{s\frac{t_1-1}{t_1-s}}}{\ln
j}=\infty \ \ \ {\rm for \ any}\ t_1>1.\end{equation} Obviously,
$\alpha=s\frac{t_1-1}{t_1-s}\in(0,s)$. Applying \eqref{3.1} and
Lemma $\ref{lem4}$ with $\alpha\in (0,s)$, we obtain \eqref{1.12}. By Lemma \ref{lem2}, we get \eqref{1.10}.

Now we  show \eqref{1.11}. Since  EXP-$(s,t)$-WT holds with $s<1$
and $t\le 1$, then EXP-WT also holds and hence
$\lim\limits_{j\to\infty}\gamma_j=0$. Applying Lemma \ref{lem1}
with $d=d(\vz)$, we have
$$j(\vz)^{d(\vz)}\le n(\vz^{2d(\vz)},S_{d(\vz),\gamma}).$$
It follows that
$$\frac{d(\vz)\ln j(\vz)}{d(\vz)+(2d(\vz)\ln\vz^{-1})^s}\le\frac{\ln n(\vz^{2d(\vz)},
S_{d(\vz),\gamma})}{d(\vz)^t+(2d(\vz)\ln\vz^{-1})^s}\to 0\ \ \
{\rm as}\ \vz\to 0.$$ This leads to
$$\lim_{\vz\to 0^+}\frac{d(\vz)\ln j(\vz)}{d(\vz)+(2d(\vz)\ln\vz^{-1})^s}=\lim_{\vz\to 0^+}\frac{\ln j(\vz)}{1+2^sd(\vz)^{s-1}(\ln\vz^{-1})^s}=0.$$
Since $\ln j(\vz)\ge \ln2$ for $0<\vz<\lz_2$, it requires that
$$\lim_{\vz\to 0^+} d(\vz)^{s-1}(\ln\vz^{-1})^s=\infty.$$
From this we conclude that
$$\lim_{\vz\to 0^+}\frac{d(\vz)^{1-s}}{(\ln\vz^{-1})^{s}}=0 \ \
{\rm or,\ equivalently,}\ \ \
d(\vz)^{1-s}=o(1)(\ln\vz^{-1})^s.$$
Hence
\begin{equation*}
\lim_{\vz\to 0^+}\frac{d(\vz)\ln j(\vz)}{d(\vz)+2^sd(\vz)^s(\ln\vz^{-1})^s}
=\lim_{\vz\to 0^+}\frac{d(\vz)^{1-s}\ln
j(\vz)}{(o(1)+2^s)(\ln\vz^{-1})^s}=0.
\end{equation*}
Therefore \eqref{1.11} holds.

On the other hand, assume that \eqref{1.10} and \eqref{1.11} hold.
We want to show that EXP-$(s,t)$-WT holds for this $s$ and $t$. By Lemma \ref{lem2}, we get \eqref{1.12}. From \eqref{1.11} and the fact that $\ln j(\vz)\ge \ln2$ for $\vz\in
(0,\lz_2)$, we get \eqref{2.4}. According to Lemma \ref{lem3}, it is equivalent to
\eqref{2.5}. Hence
\begin{equation}\label{3.2}
\lim_{j\to\infty}\frac{\ln \frac{1}{\gamma_j\lambda_2}}{j^{\frac{1-s}{s}}}=\infty.
\end{equation}

Set $$\ln
\frac{1}{{\gamma}_k\lambda_2}=k^{\frac{1-s}{s}}\widehat{h}(k),\ \
{\rm and}\ \ \widetilde{h}(k)=\inf\limits_{j\geq
k}\widehat{h}(j).$$
 Then the sequence $\{\widetilde{h}(k)\}_{k\in\NN}$ is non-decreasing, and  satisfies
 $$\widehat{h}(k)\geq \widetilde{h}(k) \ \ {\rm and} \ \ \lim_{k\rightarrow\infty}\widetilde{h}(k)=\lim_{k\rightarrow\infty}\widehat{h}(k)=\infty.$$
We put
$$h(1)=\widetilde{h}(1),\ \  h(k+1)=\min\Big\{\,\frac{\ln(k+2)}{\ln(k+1)}h(k),\ \widetilde{h}(k+1)\,\Big\},\ k=1,2,\cdots.$$
Clearly, it satisfies
$$h(k)\le\frac{\ln(k+2)}{\ln(k+1)}h(k)\ \ \ {\rm and}\ \ \   h(k)\le \widetilde{h}(k)\le\widetilde{h}(k+1),$$
 which yields
 $$h(k)\le \min\Big\{\,\frac{\ln(k+2)}{\ln(k+1)}h(k),\widetilde{h}(k+1)\,\Big\}=h(k+1).$$
Then $\{h(k)\}_{k\in\NN}$ is a non-decreasing sequence,
$$\lim_{k\to\infty}h(k)=\infty,\ \ \ {\rm and}\ \   \widehat{h}(k)\ge \widetilde{h}(k)\ge h(k).$$
We also note that for $m>n$, \begin{equation}\label{3.3}h(m)\le
\frac{\ln(m+1)}{\ln m}h(m-1)\le \cdots\le
\frac{\ln(m+1)}{\ln(n+1)}h(n),\end{equation} and
$$h(2n)\le 2h(n) \ \ {\rm for}\  n\ge1.$$
Put
$$\widetilde{\lz}_{k,j_k}=\left\{
\begin{array}{ll}
1, & \mbox{ if } j_k=1,\\
\widetilde{\gamma}_k \lambda_{j_k}, & \mbox{ if } j_k \ge 2,
\end{array}\right.  \ \ {\rm where}\ \
\ln\frac{1}{\widetilde{\gamma}_k\lambda_2} =k^{\frac{1-s}s}h(k),\ k=1,2,\cdots.$$

If $\ln\vz^{-1}\le d^t$, then $d\to\infty$ as
$\vz^{-1}+d\to\infty$. From \eqref{1.12}, \eqref{3.2} and  Lemma \ref{lem4}, we
obtain \eqref{1.7}. Then EXP-$(1,t)$-WT holds for this $t$. We
have
$$0=\lim_{\vz^{-1}+d\to\infty}\frac{\ln n(\vz,S_{d,\bsgamma})}{\ln\vz^{-1}+d^t}\ge
\lim_{d\to\infty}\frac{\ln n(\vz,S_{d,\bsgamma})}{2d^t}\ge0 .$$
It follows that
$$\lim_{\vz^{-1}+d\to\infty}\frac{\ln n(\vz,S_{d,\bsgamma})}{(\ln\vz^{-1})^s+d^t}\le \lim_{d\to\infty}\frac{\ln n(\vz,S_{d,\bsgamma})}{d^t}=0.$$
In this case, EXP-$(s,t)$-WT holds.

If $\ln\vz^{-1}\ge d^t$, then $\vz^{-1}\to \infty$ as
$\vz^{-1}+d\to\infty$. We use \eqref{1.3} and \eqref{2.7} to obtain
\begin{align*}
n(\vz,S_{d,\gamma})&=\big|\{\bsj\in\NN^d : \prod_{k=1}^{d}\lambda_{k,j_k}>\vz^2\}\big|\\
&\le \prod_{i=1}^{\lfloor\log_2d\rfloor}\big|\{\bsj\in\NN^{|\rho(i)|} :\prod_{k\in\rho(i)}\lambda_{k,j_k}>\vz^2 \}\big|\\
&\le \prod_{i=1}^{\lfloor\log_2d\rfloor}\big|\{\bsj\in\NN^{|\rho(i)|} :\prod_{k\in\rho(i)}\lambda_{2^{i-1},j_k}>\vz^2 \}\big|\\
&\le \prod_{i=1}^{\lfloor\log_2d\rfloor}\big|\{\bsj\in\NN^{|\rho(i)|} :\prod_{k\in\rho(i)}\widetilde{\lambda}_{2^{i-1},j_k}>\vz^2 \}\big|.
\end{align*}
Taking the logarithm, we use Lemma \ref{lem6} to obtain
\begin{align}\label{3.4}
\ln n(\vz,S_{d,\gamma})
&\le \sum_{i=1}^{\lfloor\log_2d\rfloor}\Big(\ln C_{|\rho(i)|}^{\widetilde{a}_i(\vz)}+\widetilde{a}_i(\vz)\ln j(\vz)\Big),
\end{align}
where $\widetilde{a}_i(\vz)=\min\{|\rho(i)|,\,\lceil\frac{\ln\vz^{-2}}{\ln(\widetilde{\gamma}_{2^{i-1}}\lambda_2)^{-1}}\rceil-1\}$.
We have
$$\widetilde{a}_i(\vz)\le \min\{2^{i+1},\,\frac{\ln\vz^{-2}}{\ln(\widetilde{\gamma}_{2^{i-1}}\lambda_2)^{-1}}\}\le \min\{2^{i+1},\,\frac{\ln\vz^{-2}}{2^{(i-1)\frac{1-s}{s}}h(2^{i-1})}\}.$$
Note that $\{2^{l+2}\,2^{l\cdot\frac{1-s}s}h(2^l)\}$ is
increasing. Choose $l_0$ such that
$$2^{l_0+1}\cdot 2^{(l_0-1)\frac{1-s}{s}}h(2^{l_0-1})\le \ln\vz^{-2}<2^{l_0+2}\cdot 2^{l_0\cdot\frac{1-s}{s}}h(2^{l_0}).$$
This implies
\begin{equation}\label{3.5}
2^{l_0}\asymp \bigg(\frac{\ln\vz^{-2}}{h(2^{l_0})}\bigg)^s \ \ {\rm or,\ equivalently, }\  \ \ \ln\vz^{-2}\asymp 2^{l_0/s}h(2^{l_0}),
\end{equation}
where the notation $A\lesssim
B$ means that there exists a nonessential positive constant $C$
such that $A\le C B$, and $A\asymp B$ means that $A\lesssim B$ and
$B\lesssim A$. By \eqref{2.10} we note that if $1\le i\le l_0$,
then
$$
\ln C_{|\rho(i)|}^{\widetilde{a}_i(\vz)}\le 2^i, $$and if $i>
l_0$, then by \eqref{3.3} and \eqref{3.5} we have
\begin{align*} \ln
\frac{e\,|\rho(i)|}{\widetilde{a}_i(\vz)}& \lesssim 1+\ln
\Big(\frac{2^{i+1}\cdot 2^{(i-1)\frac{1-s}{s}}h(2^{i-1})}{
\ln\vz^{-2}}\Big)\lesssim
1+\ln\Big( 2^{\frac{i-l_0}{s}}\cdot \frac{h(2^{i-1})}{h(2^{l_0})}\Big)\\
&\lesssim 1+
\frac{i-l_0}{s}+\ln\Big(\frac{\ln(1+2^{i-1})}{\ln(1+2^{l_0})}\Big)\lesssim i-l_0+
\ln \frac{e\,i}{l_0},
\end{align*}
and hence
\begin{align*}\ln C_{|\rho(i)|}^{\widetilde{a}_i(\vz)} &\le
\widetilde{a}_i(\vz) \ln
\frac{e\,|\rho(i)|}{\widetilde{a}_i(\vz)}\lesssim
\frac{\ln\vz^{-2}}{2^{(i-1)\frac{1-s}{s}}h(2^{i-1})}\ln
\frac{e\,|\rho(i)|}{\widetilde{a}_i(\vz)}\\&\lesssim
\frac{\ln\vz^{-2}}{2^{(i-1)\frac{1-s}{s}}h(2^{l_0})}\Big(i-l_0+\ln
\frac{e\,i}{l_0}\Big).
\end{align*}
Going back to \eqref{3.4} we obtain
\begin{align*}
 &\quad\ \ln n(\vz, S_{d, \bsgamma})\le \sum_{i=1}^{\infty}\Big(\ln C_{|\rho(i)|}^{\widetilde{a}_i(\vz)}+\widetilde{a}_i(\vz)\ln j(\vz)\Big)\\
 &\lesssim\sum_{i=1}^{l_0} \Big(\ln C_{|\rho(i)|}^{\widetilde{a}_i(\vz)}+\widetilde{a}_i(\vz)\ln j(\vz)\Big)
+ \sum_{i=l_0+1}^{\infty}\Big(\ln C_{|\rho(i)|}^{\widetilde{a}_i(\vz)}+\widetilde{a}_i(\vz)\ln j(\vz)\Big)\\
&\lesssim \sum_{i=1}^{l_0}2^i\ln j(\vz)+\sum_{i=l_0+1}^{\infty} \frac{\ln\vz^{-1}}{2^{(i-1)\frac{1-s}{s}}h(2^{l_0})}
\Big(i-l_0+\ln \frac{e\,i}{l_0}+\ln j(\vz)\Big)\\
&\lesssim 2^{l_0}\ln j(\vz)+\frac{\ln\vz^{-2}}{2^{(l_0-1)\frac{1-s}{s}}h(2^{l_0})}\big(1+\ln
j(\vz)\big)\lesssim 2^{l_0}\ln j(\vz).
\end{align*}

Now we want to  show that
$2^{\frac{l_0}{1-s}}\lesssim\widetilde{d}(\vz)\le d(\vz),$ where
$$\widetilde{d}(\vz):=\max\{j: \widetilde{\gamma}_j>\vz^2\}\le \max\{j: {\gamma}_j>\vz^2\}=:d(\vz).$$
Note that
\begin{align*}
\widetilde{d}(\vz)
&=\max\{j: \ln \frac{1}{\widetilde{\gamma}_j\lambda_2}<\ln\vz^{-2}-\ln\frac{1}{\lambda_2}\}\\
&=\max\{j: j^{\frac{1-s}{s}}h(j)<\ln\vz^{-2}-\ln\frac{1}{\lambda_2}\}.
\end{align*}
Choose a sufficiently large positive number $c_0>1$, such that for
sufficient small $\vz>0$,
$$\ln \Big(\widetilde{\gamma}_{2^{\frac{l_0}{1-s}-c_0}}\lambda_2\Big)^{-1}=\big(2^{\frac{l_0}{1-s}-c_0}\big)^{\frac{1-s}{s}}h\big(2^{\frac{l_0}{1-s}-c_0}\big).$$
From \eqref{3.3} and \eqref{3.5}, we get
\begin{align*}
\ln \Big(\widetilde{\gamma}_{2^{\frac{l_0}{1-s}-c_0}}\lambda_2\Big)^{-1} &\le 2^{\frac{l_0}{s}}\cdot 2^{-c_0\cdot\frac{1-s}{s}}\cdot h(2^{l_0})\cdot \frac{\ln\big(1+2^{\frac{l_0}{1-s}}-c_0\big)}{\ln(2^{l_0}+1)}\\
&\le 2^{\frac{l_0}{s}}h(2^{l_0})\cdot 2^{-c_0\cdot\frac{1-s}{s}}\cdot \frac{1}{1-s}\\
&\le \frac{1}{2}\ln\vz^{-2}<\ln\vz^{-2}-\ln\frac{1}{\lambda_2}.
\end{align*}
It implies that
$$2^{\frac{l_0}{1-s}-c_0}\le \widetilde{d}(\vz)\le d(\vz),\ \ {\rm
or,\ equivalently}\ \ \ 2^{l_0}\lesssim d(\vz)^{1-s}.$$
Then by \eqref{1.11} we obtain
\begin{align*}
0&\le \lim_{\vz^{-1}+d\to\infty}\frac{\ln
n(\vz,S_{d,\bsgamma})}{(\ln\vz^{-1})^s+d^t}
\lesssim\lim_{\vz^{-1}\to \infty}\frac{2^{l_0}\ln j(\vz)}{(\ln \vz^{-1})^s}\\
&\lesssim \lim_{\vz^{-1}\to \infty}\frac{d(\vz)^{1-s}\ln
j(\vz)}{(\ln \vz^{-1})^s}=0.
\end{align*}
This means that in the case of $\ln\vz^{-1}\ge d^t$,
EXP-$(s,t)$-WT holds.

Theorem \ref{thm1}  is proved. $\hfill\Box$

\

\noindent{\it Proof of Corollary \ref{cor}.}

(1) Assume that EXP-$(s,t)$-WT holds with $0<s<1$ and $0<t\le 1$.
 By Theorem \ref{thm1}, we get \eqref{1.10} and \eqref{1.11}. According to Lemma \ref{lem2},
 \eqref{1.10} is equivalent to \eqref{1.12}. Hence,
 \eqref{1.12} holds.
 By  \eqref{1.11} and the fact $j(\vz)\ge 2$ for $\vz^2<\lz_2$,  we get \eqref{2.4}.
 By Lemma \ref{lem3}, we obtain \eqref{1.13}.

(2) Assume that \eqref{1.12} and \eqref{1.14} hold. By Lemma
\ref{lem2} and Theorem \ref{thm1} it suffices to prove
\eqref{1.11}. According to Lemma \ref{lem3} and \eqref{1.14}, we
obtain
$$\lim_{\vz\to0}\frac{d(\vz)^{1-s+s\delta }}{(\ln\vz^{-1})^s}=0. $$
By Lemma 2 and \eqref{1.12}, it follows that
$$\lim_{\vz\to0}\frac{d(\vz)^{1-s}\ln j(\vz)}{(\ln\vz^{-1})^s}=
\lim_{\vz\to0}\frac{d(\vz)^{1-s+s\delta }}{(\ln\vz^{-1})^s}\
\lim_{\vz\to0}\frac{\ln j(\vz)}{(\ln \vz^{-1})^{s\delta}} =0.
$$
Hence, \eqref{1.11} holds. Corollary \ref{cor} is proved.
 $\hfill\Box$

\

\noindent{\it Proof of Theorem \ref{thm2}.}

Assume  that EXP-UWT holds. Then for any $0<s<1$ and $0<t\le 1$,
EXP-$(s,t)$-WT also holds. According to Corollary $\ref{cor}$,
$$\lim_{j\rightarrow\infty}\frac{\ln(\ln\frac{1}{\lambda_j})}{\ln(\ln j)}=\infty\ \ \ \mbox{and}\ \ \
\lim_{j\rightarrow\infty}\frac{\ln\frac{1}{\gamma_j}}{j^\frac{1-s}{s}}=\infty\
\ {\rm for\ all}\ s\in(0,1).$$ Using Lemma  $\ref{lem5}$, we
obtain $\eqref{1.12}$ and \eqref{1.15}.

\indent On the other hand, assume that  $\eqref{1.12}$ and
\eqref{1.15} hold. We want to show EXP-UWT holds,  i.e.,
EXP-$(s,t)$-WT holds  for any $s, t\in (0,1)$. It follows from
\eqref{1.15} and Lemma \ref{lem5} that \eqref{1.14} holds for any
$s\in (0,1)$. By Corollary $\ref{cor}$ we know that EXP-$(s,t)$-WT
holds for any $s,t\in (0,1)$. Hence, EXP-UWT holds.

We can also use Lemma \ref{lem1} to prove that EXP-UWT holds.  Indeed,
 it suffices to show that EXP-$(s,t)$-WT holds for any $s,t\in (0,1)$.
 Let $V:=\max\{d^t,(\ln\vz^{-1})^s\}$, then $V\to\infty$ as $d+\varepsilon^{-1} \rightarrow
 \infty$. Using the same method as in Lemma \ref{lem2} we obtain  that
$$\ln j(\vz)\le  \max\{\ln J_1,\ (2\ln\vz^{-1})^{\frac{s}{3}}\}\le \max\{\ln J_1, (2V)^{1/3}\}, $$
and $$d(\vz)\le \max\{J_2, (2\ln \vz^{-1}) ^{\frac{s}{3}}\}\le
\max\{ J_2, (2V)^{1/3}\},$$where $J_1,J_2$ are  the positive
constants. It follows from Lemma \ref{lem1} that
\begin{align*} &\quad\ \lim_{d+\varepsilon^{-1} \rightarrow \infty}
\frac{\ln n(\vz, S_{d,\bsgamma})}{d^t +(
\ln\varepsilon^{-1})^s}\le \lim_{V \rightarrow \infty}
\frac{d(\varepsilon)\ln j(\varepsilon)}{V}\\
& \le \lim_{V\rightarrow \infty}\frac{\max\{\ln J_1,
(2V)^{1/3}\}\cdot \max\{J_2, (2\ln \vz^{-1}) ^{\frac{s}{3}}\}
}{V}=0.
\end{align*}

 Theorem \ref{thm2} is proved. $\hfill\Box$

\

\noindent\textbf{Acknowledgments}
  The authors  were
supported by the National Natural Science Foundation of China
(Project no. 12371098).

\end{document}